\newtheorem{thm}{Theorem}
\newtheorem{ansatz}{Conjecture}
\newtheorem{lemma}{Lemma}
\newtheorem{proposition}{Proposition}
\theoremstyle{remark}
\newcommand{\meanv}[1]{\left\langle#1\right\rangle}
\newcommand{\bxi}{\boldsymbol{\xi}}
\newcommand{\bs}{\boldsymbol{s}}
\newcommand{\bxip}{\boldsymbol{\hat{\xi}}}
\newcommand{\bS}{\bm{\mathcal{S}}}
\DeclareMathOperator{\extr}{\mathrm{Extr}}  
\title{Hopfield model with planted  patterns: a teacher-student self-supervised learning model}
\author[1]{Francesco Alemanno}
\author[1]{ Luca Camanzi}
\author[1]{Gianluca Manzan}
\author[1]{Daniele Tantari \thanks{Corresponding Author: daniele.tantari@unibo.it}}
\affil[1]{Department of Mathematics, University of Bologna, 

Piazza di Porta San Donato 5, 40126, Bologna (BO), Italy}
\begin{document}

\maketitle

\begin{abstract}
While Hopfield networks are known as paradigmatic models for \textit{memory storage} and retrieval, modern artificial intelligence systems mainly stand on the machine \textit{learning} paradigm. We show that it is possible to formulate a teacher-student self-supervised learning problem with Boltzmann machines in terms of a suitable generalization of the Hopfield model with structured patterns, where the spin variables are the machine weights and patterns correspond to the training set's examples. We analyze the learning performance by studying the phase diagram in terms of the training set size, the dataset noise and the inference temperature (i.e. the weight regularization). With a small but informative dataset the machine can learn by memorization. With a noisy dataset, an extensive number of examples above a critical threshold is needed. In this regime the memory storage limits  becomes an opportunity for the occurrence of a learning regime in which the system can generalize.
\end{abstract}

\section{Introduction}


What is the maximum amount of patterns that can be stored by a neural network and efficiently retrieved?
What is the minimum amount of examples needed for a neural network to understand the hidden structure of a noisy, high dimensional dataset? 
They seem two different questions, the former concerning the limit of a memorization mechanism, the latter involving the beginning of a learning process. In facts they are strictly related, being a common life experience, especially for science students, that learning comes to help when memory starts to fail.

In this work we investigate this relation by introducing a suitable generalization of the Hopfield model that naturally emerges in the context of a teacher-student, self-supervised learning problem.

The Hopfield model \cite{hopfield1982neural} is the statistical mechanics paradigm for associative memory and describes a complex system of connected neurons able to retrieve patterns of information previously stored in their interactions through the so called Hebb rule \cite{hebb2005organization}. In the case of uncorrelated patterns, retrieval (thus memorization) is possible up to a critical load, that scale linearly with the system's size, beyond which a spin-glass regime occurs where the system gets confused \cite{amit1987statistical,coolen2005theory}. Since the Hopfield seminal work, several generalizations have been investigated in relation to their critical storage capacity and retrieval capabilities. For example, super-linear capacity has been found by allowing multi-body interactions \cite{gardner1987multiconnected}, parallel retrieval has been studied in relation to patterns sparsity \cite{agliari2012multitasking,agliari2013immune,agliari2013immunemedium,agliari2017retrieving,sollich2014extensive} or hierarchical interactions \cite{agliari2015retrieval,agliari2015hierarchical,agliari2014metastable,agliari2015topological} and non-universality has been shown with respect to more general patterns entries and unit priors \cite{barra2018phase,barra2012glassy,agliari2017neural,genovese2020legendre,agliari2015anergy,rocchi2017high}.  A specific attention has been given to model with intra  pattern and among patterns correlation \cite{fontanari1990storage,der1992modified,van1997hebbian,agliari2022storing}. In this works the negative effects of the  correlation structure on the system's capacity emerge and different non-Hebb rules are proposed to mitigate it, restoring the possibility of pattern retrieval.  Other works on the effect of patterns correlation on the critical capacity are \cite{lowe1998storage,gutfreund1988neural,de2023effect}.

From this perspective it appears as a collective endeavour in favour of an artificial intelligence (AI) that works exclusively by machine memorization, in contrast to modern AI that is being dominated by machine learning. Recent models have started to consider patterns correlation as connected to the existence of a new regime which is more close to learning than memory retrieval. For example in \cite{agliari2022emergence,alemanno2022supervised,huang2016unsupervised,huang2017statistical} patterns are blurred copies of some prototypes while in  \cite{mezard2017mean,ichikawa2022statistical,negri2023hidden} patterns are generated from a set of hidden features: in both cases there exist a regime where the Hopfield network can extract (learn) the underlying structure more than merely memorize the examples.

In this work we show that this type of mechanism is exactly what happens in a problem of self-supervised learning \cite{liu2021self}, where a machine is trained to learn the probability distribution of a given dataset from a subset of unlabelled examples\footnote{In this paper the term self-supervised learning is considered as synonymous of the more classical \textit{unsupervised} learning and in contrast with supervised learning from labelled data.}. Depending on the  noise of the dataset and  weight regularization the machine stops working by memorization and starts learning by generalization.

The paper is organized as follows. In the Introduction a generalization of the Hopfield model with planted correlated patterns is presented together with its natural interpretation in terms of the machine weights posterior distribution in a teacher-student self-supervised learning problem. In the Section Results the model is studied in three different regimes: in the Bayes optimal \cite{iba1999nishimori,barbier2019adaptive} setting the machine can work by memorization if the dataset noise is relatively small while it can retrieve the signal by generalization when the training set is made of a sufficiently high number of weakly informative examples; beyond Bayes-optimality \cite{angelini2023limits,angelini2021mismatching} the learning regime depends on the relation between dataset noise and weight regularization (inference temperature). In the last Section the proofs of the main results are given while in the Appendix the derivation of the Conjectures is provided.


 We consider a Hopfield model with $N$ binary spins $\bxi=(\xi_1,\ldots , \xi_N)\in \{-1,1\}^N$ and $M$ quenched random binary patterns ${\bm{\mathcal{S}}:=\{\bs^\mu\}^M_{\mu=1}}=\{s^\mu_1,\ldots,s^\mu_N \}^M_{\mu=1}\in \{-1,1\}^{NM}$. Given a specific realization of the patterns and a planted configuration $\bxip\in \{-1,1\}^N$, we consider the Boltzmann-Gibbs distribution
\begin{equation}\label{eq:hopfield}
    \hat{P}(\bxi|\bS,\bxip)= \hat{Z}^{-1}(\bS,\bxip) \exp\left( \frac{\beta}{N}{\sum_{\mu=1}^M}\sum_{i<j}^N s_i^\mu s_j^\mu\xi_i\xi_j+\lambda \sum_{i=1}^N \hat{\xi}_i \xi_i\right),
\end{equation}
where $\beta\geq 0$ is the inverse temperature, $\lambda\geq0$ is the amplitude of an external field in the direction of the planted configuration and 
\begin{equation}
   Z(\bS,\bxip)=\sum_{\bxi}  \exp\left( \frac{\beta}{N}{\sum_{\mu=1}^M}\sum_{i<j}^N s_i^\mu s_j^\mu\xi_i\xi_j+\lambda \sum_{i=1}^N \hat{\xi}_i \xi_i\right)
\end{equation}
is the model partition function. Differently from the standard Hopfield model, where patterns consist of i.i.d. Rademacher random variables, we assume the patterns are independent but with a specific spatial correlation induced by the planted configuration  i.e. they are distributed according to 
\begin{equation}\label{eq:datadistribution}
    P(\bS|\bxip)=\prod_{\mu=1}^M P(\bs^\mu|\bxip)= \prod_{\mu=1}^M z_{\mu}^{-1}\exp\left(\frac{\beta}{N}\sum_{i<j}^N\hat{\xi_i}\hat{\xi_j}s^\mu_is^\mu_j+h^\mu\sum_{i=1}^N \hat{\xi}_is^\mu_i\right),
\end{equation}
 where the partition function 
\begin{equation}
    z_{\mu} :=  \sum_{\bm{s}}\exp\left(\frac{\beta}{N}\sum_{i<j}^N\hat{\xi_i}\hat{\xi_j}s_is_j+h^\mu\sum_{i=1}^N \hat{\xi}_is^\mu_i\right)=
    \sum_{\bs}\exp\left(\frac{\beta}{N}\sum_{i<j}^Ns_is_j+h^\mu\sum_{i=1}^Ns^\mu_i\right) ,
\end{equation}
is nothing but the partition function of the classical Curie-Weiss model  at  inverse temperature $\beta$ and external field $h^{\mu}\in\mathbb{R}$. In the second equality we just use the Gauge transformation $s_i\to s_i\hat{\xi}_i$. Finally we assume the planted configuration $\bxip$ is a quenched Rademacher random vector.

The model (\ref{eq:hopfield}, \ref{eq:datadistribution}), in particular in the limit $\lambda,\bm{h}\to 0$, is motivated by its natural application  in the context of self-supervised learning, where a machine is trained over a dataset (training set) of unlabelled examples to retrieve their probability distribution. To this task the Hopfield model belongs to the class of the Restricted Boltzmann Machines (RBMs) \cite{ackley1985learning},  that are widely used architectures and consist of neural networks with two layers of units. The first (visible) layer reproduces the data with its units $\bs=(s_1,\ldots, s_N)$, the second (hidden) layer has the role of building an internal representation of the data structure using its units $\bm{\tau}=(\tau_1,\ldots, \tau_P)$. In the statistical mechanics literature they are also studied in the context of multi-species spin-glass models \cite{barra2015multi,panchenko2015free,alberici2021multi,alberici2020annealing,alberici2021deep,genovese2022remark,genovese2016non,genovese2015legendre,barra2014mean,genovese2017overlap,barra2014solvable}. In practice a RBM is the parametric probability distribution 
\begin{equation}
    P_{\bm{w}}(\bs)=z_{\bm{w}}^{-1}\mathbb{E}_{\bm{\tau}}\exp\left( \sum_{i,j}^{N,P} w^j_i s_i\tau_j \right),
\end{equation}
whose parameters $\bm{w}\in \mathbb{R}^{NP}$ need to be fit with the data. The expectation $\mathbb{E}_{\bm{\tau}}$ is intended over the a priori distribution of the hidden units. It is well known \cite{barra2012equivalence,agliari2018non} that if one assumes gaussian units $\bm{\tau}\sim\mathcal{N}(0,\mathbb{I})$ then the RBM is exactly a Hopfield model with $P$ patterns $\{\bm{w}^{\mu}\}_{\mu=1}^P$. In a self-supervised learning setting, given an unknown probability distribution $P^0$ and a training set of $M$ examples ${\bm{\mathcal{S}}=\{\bs^\mu\sim P^0\}^M_{\mu=1}}$ drawn independently from $P^0$, the aim is to approximately learn $P^0$ with $P_{\bm{w}}$ by tuning $\bm{w}$. The learning performance clearly depends on the structure of the data, i.e. $P^0$, the properties of the machine, i.e. $P_{\bm{w}}$, and the amount of data.
A crucial research question is thus about the typical size of the training set necessary for the machine to efficiently learn, given its architecture and the structure of the data.  To answer this question we can consider a controlled scenario in which the dataset $\bS$ is generated from a (teacher) machine $P_{\bm{\hat{w}}}$ and another (student) machine $P_{\bm{w}}$ is trained over $\bS$.  In the case $P=1$, choosing $\bm{\hat{w}}=\sqrt{\beta/N}\bxip$, we find that the probability distribution of the dataset is exactly as in Eq.\ref{eq:datadistribution}, i.e. it has a spatial correlation induced by the planted configuration $\bxip$. The parameter $\beta^{-1}$ can be interpreted as the amount of noise in the training set examples or at the same time the typical strength of the machine weights, i.e. weights regularization.  In a Bayesian framework, the posterior distribution of the student's machine weights $\bm{w}=\sqrt{\beta/N}\bxi$ given the dataset reads as
\begin{equation}\label{eq:post_prob}
    \hat{P}(\bm{\xi}|\bm{\mathcal{S}})=\frac{P(\bxi)\prod^M_{\mu=1}P(\bm{s}^\mu|\bm{\xi})}{ P(\bm{\mathcal{S}})} = Z^{-1}(\bm{\mathcal{S}})\exp\left(\frac{\beta}{N}{\sum_{\mu=1}^M}\sum_{i<j}^N s_i^\mu s_j^\mu\xi_i\xi_j\right),
\end{equation}
which is exactly the Hopfield model of Eq. \ref{eq:hopfield} in absence of fields. In the following we refer to the machine weights $\bxi$ as the student pattern to distinguish them from the teacher (planted) pattern $\bxip$, also denoted as the \textit{signal}. In a statistical inference framework Eq. \ref{eq:datadistribution} define the so called \textit{direct} Hopfield model describing the dataset, while Eq. \ref{eq:post_prob}  can be considered as the \textit{inverse} model. Interestingly it is still a Hopfield model, \textit{dual} w.r.t. the direct model,  where the spin variables correspond to the machine weights $\bxi$ while the training set’s examples $\bS$ play the role of the patterns.
We also refer to the dual patterns as planted disorder to enlighten that they are in turn drawn from a Hopfield model with a planted pattern, i.e. $\bxip$. This teacher-student setting has been introduced in \cite{barra2017phase}, originally inspired by \cite{cocco2011high} and recently studied in the case $P=2$ for boolean RBM \cite{hou2019minimal}. It was also used in \cite{decelle2021inverse} to propose alternative posterior based methods for inverse problems in the case of structured dataset.

In this setting it is possible to quantify the learning performance of the student machine by measuring how much the student pattern $\bxi$ is close to the signal $\bxip$, i.e. by computing $Q(\bxi,\bxip)$, once introduced the overlap between two vectors $\bxi^1,\bxi^2\in\mathbb{R}^N$ as 
\begin{equation}
    Q(\bxi^1,\bxi^2)=\frac{\bxi^1\cdot\bxi^2}{N}.
\end{equation}
Similarly we can evaluate the amount of information contained in the data with the overlap $Q(\bs^\mu,\bxip)$ between the examples and the teacher pattern. Finally we can characterize the memorization performance of the machine by introducing the overlap $Q(\bs^\mu,\bxi)$ between the examples and the student pattern (sampled from the posterior (\ref{eq:post_prob})). 
 We define the bracket $\meanv{.}^{\hat{}}$ as the expected value w.r.t the joint distribution of  signal,  training set and student pattern $(\bxip,\bS,\bxi)$, i.e. for any function $f:\{-1,1\}^N\times\{-1,1\}^{NM}\times\{-1,1\}^N\to\mathbb{R}$,
\begin{equation}
    \meanv{f}^{\hat{}}:=
2^{-N}\sum_{\bxip,\bS,\bxi} f(\bxip,\bS,\bxi)\hat{P}(\bm{\xi}|\bm{\mathcal{S}},\bxip) P(\bS|\bm{\hat{\xi}}).
\end{equation}
It is interesting to note that using the Gauge transformation $\xi_i \to \xi_i\hat{\xi}_i$ and $s^\mu_i\to s^\mu_i \hat{\xi}_i$,   we get for any bounded function $f$ of the overlap that
\begin{eqnarray}
  \meanv{f(Q(\bxi,\bxip))}^{\hat{}} &=&2^{-N}\sum_{\bxip,\bS,\bxi} f(Q(\bxi,\bxip))\hat{P}(\bm{\xi}|\bm{\mathcal{S}},\bxip) P(\bS|\bm{\hat{\xi}})  \nonumber\\  
  &=&\sum_{\bS,\bxi} f(Q(\bxi,\bm{1})) \hat{P}(\bm{\xi}|\bm{\mathcal{S}},\bm{1})P(\bS|\bm{1})= \meanv{f(Q(\bxi,\bm{1}))} \label{eq:gauge},
\end{eqnarray}
where we have defined the bracket $\meanv{.}$ as the expectation w.r.t. the joint distribution $\hat{P}(\bm{\xi}|\bm{\mathcal{S}},\bm{1})P(\bS|\bm{1})$ that does not depend on the signal $\bxip$ (ferromagnetic gauge), i.e. for any function $g:\{-1,1\}^{NM}\times\{-1,1\}^N\to\mathbb{R}$,
\begin{equation}
    \meanv{g}:=
\sum_{\bS,\bxi} g(\bS,\bxi)\hat{P}(\bm{\xi}|\bm{\mathcal{S}},\bm{1})P(\bS|\bm{1}).
\end{equation}
Note that $\hat{P}(\bm{\xi}|\bm{\mathcal{S}},\bm{1})$ is the Boltzmann-Gibbs distribution induced by the partition function
\begin{equation}\label{eq:z_gauge}
   Z(\bS)= \sum_{\bxi}\exp\left(\frac \beta N \sum_{\mu=1}^M\sum_{i<j}^N s^\mu_is^\mu_j \xi_i \xi_j +\lambda \sum_{i=1}^N \xi_i\right),
\end{equation}
that is a Hopfield model in a field, whose patterns are drawn from 
\begin{equation}\label{eq:pattern_d}
   P(\bS|\bm{1})= \prod_{\mu=1}^M z^{-1}_\mu\exp \left(\frac{\beta}{N}\sum_{i<j}s_i^\mu s_j^\mu + h^\mu\sum_{i=1}^N s^\mu_i \right),
\end{equation}
i.e. the distribution of $M$ independent Curie-Weiss models at inverse temperature $\beta$ and external field $\bm{h}$. In the following we indicate with $\mathbb{E}_{\beta,\bm{h}}$ the expectation w.r.t. the pattern distribution (\ref{eq:pattern_d}). Eq. $(\ref{eq:gauge})$ means that the overlap with the signal can be interpreted as the magnetization of a Hopfield model at inverse temperature $\beta$ with patterns $\bS$ extracted independently from a Curie-Weiss model at the same temperature. 
Analogously it holds 
\begin{equation}
  \meanv{f(Q(\bs^\mu,\bxi))}^{\hat{}} = \meanv{f(Q(\bs^\mu,\bxi))}
\end{equation}
\begin{equation}
\meanv{f(Q(\bs^\mu,\bxip))}^{\hat{}}=\meanv{f(Q(\bs^\mu,\bm{1}))}
\end{equation}
i.e. the overlap between the examples and the signal corresponds, in the ferromagnetic gauge, to the magnetization of $\bS$.
For this reason in the following we always consider, without loss of generality, that the patterns are drawn from $(\ref{eq:pattern_d})$, which does not depend on the signal $\bxip$, keeping in mind that a non-zero magnetization in this model corresponds to a macroscopic alignment with the planted configuration $\bxip$.

\section{Results}

\subsection{Learning by memorization from few highly informative examples}

At low enough temperature we expect that the examples $\bS$ are polarized, i.e. in terms of the original variables they are correlated with the planted configuration $\bxip$. At the same time, as long as the number of examples do not exceed the critical load of the student machine, the student pattern $\bxi$ will be aligned with one of them and therefore we expect it to be polarized as well. From the machine learning perspective, since each example  carries a lot of information about the signal, the student machine can easily learn $\bxip$ by memorization, even if $M=1$.


We can formalize this result and precisely quantify the goodness of the learning performance in terms of the temperature $\beta$ and the amount of data $M$ by computing the system's free energy
\begin{equation}
    f_N = -\frac 1 {\beta N}  \mathbb{E}_{\beta,\bm{h}}\log Z(\bS)
\end{equation}
where $Z(\bS)$ is  given by Eq. (\ref{eq:z_gauge}).
 In the thermodynamic limit we have the following 
\begin{thm}\label{thm:1}
If $\beta\leq 1$, or $\beta>1$ and $\lambda,h\in\mathbb{R}\setminus\{0\}$,
it holds for any $\bm{\epsilon}\in \{-1,1\}^M$
\begin{eqnarray}
-\beta f &:=&  \lim_{N\to\infty} \frac 1 N \mathbb{E}_{\beta,h\bm{\epsilon}}\log Z(\bS) 
= \operatorname{sup}_{\bm{p}\in\mathbb{R}^M} g(\bm{p})
\end{eqnarray}
with
\begin{equation}
g(\bm{p})=\log 2 -\frac{\beta \bm{p}^2}{2} +\meanv{\log\cosh \left( \beta \bs\cdot\bm{p}  +\lambda\right)}_{\bs},
\end{equation}
where we have defined the random vector $\bs\in \{-1,1\}^M$ whose entries are i.i.d. random variables with mean \begin{equation}
m_0(\beta,h):=\operatorname{argmax}_{x\in\mathbb{R}} \left[\log 2 +\log\cosh(\beta x + h)- \frac{\beta x^2}{2}\right].
\end{equation}
\end{thm}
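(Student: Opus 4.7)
The strategy is a Hubbard--Stratonovich decoupling of the Hopfield interaction followed by Laplace asymptotics, exploiting that $M$ is fixed as $N\to\infty$. As a preliminary reduction, under the gauge change of variables $s_i^\mu\mapsto \epsilon^\mu s_i^\mu$ the Hopfield interaction $\sum_\mu\sum_{i<j} s_i^\mu s_j^\mu\xi_i\xi_j$ is invariant (each $\epsilon^\mu$ appears squared) and the Curie--Weiss fields transform as $h\epsilon^\mu\mapsto h$, so both $Z(\bS;\beta,\lambda,M)$ and the measure $P^{CW}_{\beta,h\bm{\epsilon}}$ reduce to their $\bm{\epsilon}=\bm{1}$ counterparts. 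It suffices to prove the formula in this case.

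Completing the square, $\frac{\beta}{N}\sum_{i<j}s_i^\mu s_j^\mu\xi_i\xi_j=\frac{\beta}{2N}(\sum_i s_i^\mu\xi_i)^2-\frac{\beta}{2}$, and a Gaussian linearization of each of the $M$ quadratic terms (introducing $\bm{p}\in\mathbb{R}^M$) yields, after interchanging the sum over $\bxi$ with the integral,
\[
Z(\bS;\beta,\lambda,M)=e^{-\beta M/2}\Bigl(\tfrac{N\beta}{2\pi}\Bigr)^{M/2}\int_{\mathbb{R}^M}d\bm{p}\,e^{N\Phi_N(\bm{p};\bS)}, \qquad \bm{s}_i:=(s_i^1,\dots,s_i^M),
\]
with $\Phi_N(\bm{p};\bS):=\log 2-\tfrac{\beta|\bm{p}|^2}{2}+\tfrac{1}{N}\sum_{i=1}^N\log\cosh(\beta\bm{s}_i\cdot\bm{p}+\lambda)$. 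Dividing by $N$, the explicit prefactors contribute $O(\log N/N)$, and the theorem reduces to the Laplace asymptotic
\[
\lim_{N\to\infty}\tfrac{1}{N}\meanv{\log\!\int_{\mathbb{R}^M}d\bm{p}\,e^{N\Phi_N(\bm{p};\bS)}}=\sup_{\bm{p}\in\mathbb{R}^M}g(\bm{p};\beta,\lambda,h).
\]

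The key analytic input is that $\Phi_N(\bm{p};\bS)\to g(\bm{p};\beta,\lambda,h)$ uniformly on compact sets of $\bm{p}$, in probability. The patterns $\bs^\mu$ are independent across $\mu$, each a CW sample at $(\beta,h)$. Under the hypothesis ``$\beta\le 1$ or $h\neq 0$'', the Curie--Weiss variational functional $x\mapsto\log 2+\log\cosh(\beta x+h)-\beta x^2/2$ has a unique maximizer $m_0=m_0(\beta,h)$, and standard concentration for CW implies that the empirical measure $\tfrac1N\sum_i\delta_{s_i^\mu}$ converges weakly, in probability, to the Bernoulli law of mean $m_0$. Independence across $\mu$ upgrades this to convergence of the joint empirical measure of $(\bm{s}_i)_{i=1}^N$ to the product Bernoulli law, which is exactly the distribution of $\bs$ in the statement. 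Since $\bm{p}\mapsto\log\cosh(\beta\bm{s}\cdot\bm{p}+\lambda)$ is $\beta\sqrt{M}$-Lipschitz uniformly in $\bm{s}\in\{-1,1\}^M$, pointwise convergence on a countable dense subset lifts to uniform convergence on compact subsets of $\mathbb{R}^M$ via Arzel\`a--Ascoli.

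Given uniform convergence, the Laplace asymptotic follows from standard two-sided bounds: the upper bound from restricting to $|\bm{p}|\le R$ (the exterior contribution being exponentially small due to the quadratic penalty $-\tfrac{\beta|\bm{p}|^2}{2}$) and extracting the supremum, the lower bound from localizing to a small ball around a maximizer of $g$. The exchange of limit and expectation $\meanv{\cdot}$ is handled by self-averaging of $\tfrac{1}{N}\log Z$: a standard martingale or Efron--Stein estimate, combined with the Gaussian confinement of the HS integral to $|\bm{p}|\le R=O(1)$, yields $\operatorname{Var}(\tfrac{1}{N}\log Z)\to 0$. I expect the main technical obstacle to be the \emph{uniformity} in $\bm{p}$ of the empirical CW averages: within a single pattern the variables $(s_i^\mu)_i$ are exchangeable but not independent, and although their empirical measure concentrates on a deterministic limit, one must control finite-$N$ correlations carefully --- e.g.\ via the mean-field equation for $m_N^\mu=\tfrac1N\sum_i s_i^\mu$ --- to pass from weak convergence to the required uniform statement. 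The regime excluded by the hypothesis (``$\beta>1$, $h=0$'') is precisely where $m_0$ becomes two-valued and this concentration fails.
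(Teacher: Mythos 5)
Your proposal is correct and follows essentially the same route as the paper: gauge reduction to $\bm{\epsilon}=\bm{1}$, Hubbard--Stratonovich linearization to an integral of $e^{Ng_N(\bm{p};\bS)}$, uniform-on-compacts convergence in probability of $g_N$ to $g$ (which the paper establishes via the Fourier decomposition of boolean functions on $\{-1,1\}^M$ plus Chebyshev, rather than your empirical-measure/Arzel\`a--Ascoli phrasing, but to the same effect), and a generalized Laplace argument with the quadratic term controlling the exterior of a compact set. The only cosmetic difference is that the paper interchanges limit and expectation by splitting on the high-probability event where $g_N$ is uniformly close to $g$ (Propositions \ref{pr:sp1}--\ref{pr:sp2}) instead of a variance bound, which sidesteps the intra-pattern correlations you rightly flag as the delicate point.
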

The variational principle expressing the free energy corresponds to what one would expect for a Hopfield model at a low load of biased patterns and this is natural since the model for the dual patterns (\ref{eq:datadistribution}), as well as the Curie Weiss model (\ref{eq:pattern_d}), is mean-field in the thermodynamic limit, therefore spatial correlations vanish.  The solution of the variational principle is a stationary point of $g(\bm{p})$, therefore a solution of
\begin{equation}\label{eq:sc0}
    \bm{p} 
    = \left\langle \bm{s} \tanh(\beta \bm{s} \cdot \bm{p} +\lambda)\right\rangle_{\bm{s}} 
    =
    \frac{\sum_{\bm{s}}  \bm{s}\  e^{\beta m_0\bm{1}\cdot \bm{s}} \tanh(\beta \bm{s}\cdot\bm{p}+\lambda)}{\left(2 \cosh(\beta m_0)\right)^M}.
\end{equation}
 Note that Theorem \ref{thm:1} states that the limiting free energy doesn't depend on $\bm{\epsilon}$, therefore we can consider without loss of generality the case of a positive uniform external field $\bm{h}=h \bm{\epsilon}=h\bm{\bm{1}}$ acting on the examples. In terms of the  learning scenario this corresponds to saying  that the student does not care whether the examples are aligned or anti-aligned with the planted configuration $\bxip$. Note that this is not  a symmetry by global spin flipping because each example field  can have a different sign and thus a different alignment w.r.t $\bxip$. This result is particularly useful because in the limit of zero external field it is well known that the Curie Weiss measure at low temperature  is a mixture measure $P^{CW}=1/2(P^{m_0}+P^{-m_{0}})$ and consequently the training set $\bS$ is in general composed of two clusters of examples with opposite global magnetization: this is in general a complication when dealing with inverse problems \cite{braunstein2011inference,decelle2016solving,decelle2021inverse}. Nevertheless, using a Bayesian approach and thanks to the resulting  Hebbian interaction, the posterior distribution works exactly as the examples were all aligned in the same direction. 

From the solution of the free energy variational principle  all the model order parameters can be derived according to the following
\begin{proposition}\label{prop:1}
Assuming $\bm{\epsilon}=\bm{1}$ and given $\bm{p}\in\mathbb{R}^M$ the global maximizer of $g(\bm{p})$, then it holds
\begin{align}
&\bullet\lim_{N\to\infty}\meanv{Q(\bs^\mu,\bxip)}^{\hat{}}=\lim_{N\to\infty} \meanv{\frac{\bs^\mu\cdot\bm{1}}{N}}= m_0(\beta,h)&& \forall \mu=1,\ldots,M;\\
&\bullet\lim_{N\to\infty}\meanv{Q(\bs^\mu,\bxi)}^{\hat{}}=\lim_{N\to\infty} \meanv{\frac{\bs^\mu\cdot\bxi}{N}}=p^\mu&& \forall \mu=1,\ldots,M;\\
&\bullet\lim_{N\to\infty}\meanv{Q(\bxip,\bxi)}^{\hat{}}=\lim_{N\to\infty} \meanv{\frac{\bm{1}\cdot\bxi}{N}}=m=\left\langle  \tanh(\beta \bm{s} \cdot \bm{p}+\lambda )\right\rangle_{\bm{s}}.&&
\end{align}
Moreover the random variables $Q(\bs^\mu,\bxip)$, $Q(\bs^\mu,\bxi)$ and $Q(\bxip,\bxi)$ are self-averaging.
\end{proposition}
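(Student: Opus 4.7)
The plan is to identify each of the three limits as a partial derivative of a suitably perturbed free energy and then invoke Theorem~\ref{thm:1} together with the envelope theorem, using convexity of the log-partition function in the source fields to exchange differentiation with the thermodynamic limit. To this end I introduce the source-augmented partition function
\begin{equation*}
  Z_N(\bS;\beta,\lambda,\bm{\epsilon}) \;=\; \sum_{\bxi} \exp\!\left(\frac{\beta}{N}\sum_{\mu=1}^{M}\sum_{i<j}^{N} s_i^\mu s_j^\mu\,\xi_i\xi_j \;+\; \lambda\sum_{i=1}^{N}\xi_i \;+\; \sum_{\mu=1}^{M}\epsilon^\mu\,\bs^\mu\!\cdot\bxi\right),
\end{equation*}
and repeat the argument of Theorem~\ref{thm:1} verbatim to obtain that $\tfrac{1}{N}\meanv{\log Z_N}$ converges to $\sup_{\bm p\in\mathbb{R}^M}\tilde g(\bm p;\beta,\lambda,h,\bm\epsilon)$, where $\tilde g$ is obtained from $g$ by replacing the argument of the $\log\cosh$ with $\beta\bs\cdot\bm p+\bs\cdot\bm\epsilon+\lambda$. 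Uniqueness of the maximizer $\bm p^*(\bm\epsilon)$ persists on a neighbourhood of $\bm\epsilon=0$ by the implicit function theorem applied at the nondegenerate critical point $\bm p$ guaranteed by the hypothesis.

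\textbf{Identification of the limits.} The third overlap satisfies $\meanv{Q_N(\bxip,\bxi)}=\partial_\lambda\,\tfrac{1}{N}\meanv{\log Z_N}\big|_{\bm\epsilon=0}$; since the map $\lambda\mapsto\tfrac{1}{N}\log Z_N$ is convex and converges pointwise to a differentiable limit, the derivatives converge, and the envelope theorem gives $\partial_\lambda\sup_{\bm p} g=\partial_\lambda g(\bm p;\cdot)=\meanv{\tanh(\beta\bs\cdot\bm p+\lambda)}_{\bs}=m$. The second overlap equals $\partial_{\epsilon^\mu}\,\tfrac{1}{N}\meanv{\log Z_N}\big|_{\bm\epsilon=0}$; the envelope theorem applied to $\tilde g$ yields $\partial_{\epsilon^\mu}\tilde g(\bm p^*;\cdot)\big|_{\bm\epsilon=0}=\meanv{s^\mu\tanh(\beta\bs\cdot\bm p+\lambda)}_{\bs}$, which by the stationarity condition \eqref{eq:sc0} is exactly $p^\mu$. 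The first overlap, $\meanv{Q_N(\bs^\mu,\bxip)}=\meanv{\bs^\mu\!\cdot\bm 1/N}_{P^{CW}_{\beta,h\bm 1}}$, does not involve the Hopfield measure at all and reduces to the classical Curie--Weiss magnetization $m_0(\beta,h)$, whose uniqueness in the stated regime is a well-known fact.

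\textbf{Self-averaging and main obstacle.} Concentration of $\tfrac{1}{N}\log Z_N$ around its quenched mean follows from a bounded-differences argument: flipping a single entry $s_i^\mu$ changes $\tfrac{1}{N}\log Z_N$ by $O(1/N)$ (since $\log Z$ is $1$-Lipschitz in each spin variable with constant proportional to $\beta$, and the normalization gives the extra factor), so McDiarmid's inequality yields exponential concentration for $M$ at most sublinear in $N$. Combined with convexity of $\lambda,\epsilon^\mu\mapsto\tfrac{1}{N}\log Z_N$, the standard convex-analytic lemma --- pointwise convergence of a sequence of convex functions to a differentiable limit implies convergence of their derivatives --- upgrades pointwise convergence of the free energy to $L^1$-convergence of each overlap to the claimed deterministic limit, which is precisely self-averaging. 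The main obstacle I anticipate is technical rather than conceptual: ensuring that the variational representation of Theorem~\ref{thm:1} extends \emph{uniformly} in an open neighbourhood of $(\lambda,\bm\epsilon=0)$ while preserving uniqueness of the maximizer, so that differentiation and the $N\to\infty$ limit may be exchanged without ambiguity. As noted above, this is resolved by the implicit function theorem around the nondegenerate critical point, and introduces no new ideas beyond those already appearing in the proof of Theorem~\ref{thm:1}.
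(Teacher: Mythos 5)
Your identification of the three limits is sound and, for the third bullet, is exactly the paper's own route: the paper invokes the ``linear response theorem'', i.e.\ $m=\partial_\lambda(-\beta f)$ combined with Theorem~\ref{thm:1} and the envelope (Danskin) argument at the unique maximizer $\bm p$. Your extension to the second bullet via source fields conjugate to $\bs^\mu\cdot\bxi$ is a natural and legitimate variant (the paper is silent on how it identifies $p^\mu$, deferring to ``classical arguments''), and the first bullet is indeed pure Curie--Weiss. Two small remarks: you do not need the implicit function theorem or nondegeneracy of the critical point --- convexity of $\frac1N\log Z_N$ in the source fields plus differentiability of the limiting $\sup$ at the point of interest (which follows from uniqueness of the maximizer) already gives convergence of derivatives by the standard convex-analysis lemma you cite; and you should rename your source fields, since $\bm\epsilon$ is already taken in the paper for the signs of the Curie--Weiss external field.

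The genuine gap is in the self-averaging step. McDiarmid's inequality requires the coordinates being resampled to be \emph{independent}, but under $P^{CW}_{\beta,h\bm\epsilon}$ the $N$ entries $s^\mu_1,\dots,s^\mu_N$ of a single example are correlated through the Curie--Weiss measure; only the $M$ example vectors are independent of one another. Applying bounded differences over the $NM$ individual spins is therefore not justified, and applying it over the $M$ independent blocks fails quantitatively, since replacing an entire example $\bs^\mu$ changes $\frac1N\log Z_N$ by $O(1)$, not $O(1/N)$. As stated, your concentration claim does not go through. The fix is the one the paper itself uses for the related quantity $g_N(\bm p;\bS)$ in Lemma~\ref{lemma:2}: exploit the factorization over examples together with the decay of Curie--Weiss correlations, i.e.\ Chebyshev bounds on $\frac1N\sum_i\prod_{\mu\in X}s^\mu_i$ via the vanishing of connected two-point functions in the CW model (valid for $\beta<1$, or $\beta>1$ with $h\neq0$). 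Alternatively one can represent each CW example through its latent magnetization and condition on it. Either way, an additional ingredient beyond your bounded-differences argument is needed; once concentration of the free energy is secured, your convexity argument for upgrading it to self-averaging of the overlaps is fine.
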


\noindent Since we are interested in the limit $h,\lambda\to 0$, we need to study the system of equations
\begin{equation}\label{eq:selfcon_p}
    \bm{p} 
    = \left\langle \bm{s} \tanh(\beta \bm{s} \cdot \bm{p} )\right\rangle_{\bm{s}} 
    =
    \frac{\sum_{\bm{s}}  \bm{s}\  e^{\beta m_0(\beta)\bm{1}\cdot \bm{s}} \tanh(\beta \bm{s}\cdot\bm{p})}{\left(2 \cosh(\beta m_0(\beta))\right)^M},
\end{equation}
where $m_0(\beta)=m_0(\beta,0^+)$.
In the case $M=1$ the equation (\ref{eq:selfcon_p}) takes the form
\begin{equation}\label{eq:p_M=1}
    p= \left\langle s \tanh(\beta s p) \right\rangle_{s} = \tanh(\beta p).
\end{equation}
whose solutions are $p=\pm m_0(\beta)$ and from which 
\begin{align}
   m=  \left\langle \tanh(\beta s p ) \right\rangle_{s} =\left\langle s \right\rangle_{s} \tanh(\beta p )=  m_0 \tanh(\beta p)=\pm m_0(\beta)^2,
\end{align}
A similar ferromagnetic bifurcation occurs also for generic number of examples $M$. In fact  when $\beta\leq 1$ the only solution for the example magnetization is $m_0=0$, therefore the average $\left\langle \cdots\right\rangle_{\bm{s}}$ becomes  uniform  over  $\{-1,1\}^M$. As a consequence, using $|\tanh(z)|<|z|$, it holds
\begin{align}
    \bm{p}^2 &=\left\langle(\bm{s}\cdot\bm{p})\tanh(\beta\bm{s}\cdot\bm{p})\right\rangle_{\bm{s}}\leq\left\langle|\bm{s}\cdot\bm{p}||\tanh(\beta\bm{s}\cdot\bm{p})|\right\rangle_{\bm{s}}\nonumber\\
    & \leq \beta \left\langle(\bm{s}\cdot\bm{p})^2\right\rangle_{\bm{s}}=\beta \sum_{\mu,\nu} p_\mu p_\nu \left\langle s_\mu s_\nu \right\rangle_{\bm{s}}=\beta \bm{p}^2.
    \label{eq:paraNishi}
\end{align}
Hence, if $\beta<1$ the only solution is $\bm{p}=0$, from which $m=0$. This means that at high temperature there is no information about the planted pattern in a dataset composed of a finite number of examples, simply because they are uncorrelated with the signal $\bxip$.
It is immediate to verify that $\bm{p}=\bm{0}$ is a solution at any temperature, however, when $\beta>1$ it is unstable and other solutions with non zero overlap $\bm{p}$ appear. It is possible to characterize those solutions thanks to the following propositions.
\begin{proposition}\label{prop:p}
The solutions of Eqs. (\ref{eq:selfcon_p}) have equal components, i.e. $p^\mu=\bar{p}$ $\forall \mu=1,\ldots, M$.
\end{proposition}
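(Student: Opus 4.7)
The plan is to show that for any solution $\bm{p}$ of (\ref{eq:selfcon_p}) and any pair of indices $\mu \neq \nu$, the difference $p^\mu - p^\nu$ satisfies a strict contraction estimate which forces it to vanish. Subtracting the $\mu$-th and $\nu$-th equations of (\ref{eq:selfcon_p}) gives
\[
p^\mu - p^\nu = \langle (s_\mu - s_\nu)\tanh(\beta \bm{s}\cdot\bm{p})\rangle_{\bm{s}}.
\]
The key step is to condition the inner expectation on $(s_\rho)_{\rho\neq\mu,\nu}$, setting $y := \beta\sum_{\rho\neq\mu,\nu} s_\rho p^\rho$. The conditional average over $s_\mu, s_\nu \in \{\pm 1\}$ is a sum of four terms weighted by the i.i.d.\ Bernoulli marginals of mean $m_0$. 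The two diagonal terms ($s_\mu = s_\nu$) give zero against the factor $(s_\mu - s_\nu)$, and the two off-diagonal terms, each occurring with probability $(1-m_0^2)/4$, combine into
\[
\frac{1-m_0^2}{2}\bigl[\tanh(y + \beta(p^\mu - p^\nu)) - \tanh(y - \beta(p^\mu - p^\nu))\bigr].
\]

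By the mean value theorem applied to $\tanh$ (whose derivative is bounded by $1$), the bracket is at most $2\beta|p^\mu - p^\nu|$ in absolute value, so after averaging over the remaining spins one obtains
\[
|p^\mu - p^\nu| \leq \beta(1-m_0^2)\,|p^\mu - p^\nu|.
\]
To close the argument I will use the Curie--Weiss stability inequality: for $\beta > 1$ the positive root $m_0 = m_0(\beta)$ of $m = \tanh(\beta m)$ satisfies $\beta(1-m_0^2) < 1$, a consequence of the strict concavity of $\tanh$ on $[0,\infty)$, which yields $\tanh(x) > x\,(1-\tanh^2(x))$ for $x > 0$ and, evaluated at $x = \beta m_0$, gives $1/\beta > 1 - m_0^2$. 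This forces $p^\mu = p^\nu$. For $\beta \leq 1$ the chain of inequalities already displayed in (\ref{eq:paraNishi}) shows that the only solution of (\ref{eq:selfcon_p}) is $\bm{p} = 0$, which trivially has equal components.

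The main technical step is the explicit conditional computation combined with the pointwise Lipschitz bound on $\tanh$; the real content lies in invoking the Curie--Weiss inequality $\beta(1-m_0^2) < 1$ to turn the estimate into a \emph{strict} contraction precisely in the regime where nonzero solutions exist. No further subtlety seems to arise: the argument is manifestly invariant under permutations of the coordinates of $\bm{p}$ and does not require any global analysis of the stationary set of $g$.
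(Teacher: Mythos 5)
Your proof is correct and lands on exactly the same contraction inequality $|p^\mu-p^\nu|\le \beta(1-m_0^2(\beta))\,|p^\mu-p^\nu|$ as the paper, closed by the same Curie--Weiss stability fact $\beta(1-m_0^2(\beta))<1$; but the derivation of that inequality is genuinely different. The paper expands $\bm{s}\tanh(\beta\,\bm{s}\cdot\bm{p})$ in its Walsh--Fourier series on the cube, identifies the coefficients $A^\mu(\bm{p})$, obtains $p^\mu-p^\nu=(1-m_0^2)(A^\mu-A^\nu)$ with $A^\mu-A^\nu=\meanv{\tanh(\beta(p^\mu-p^\nu)+\beta\sum_{k\neq\mu,\nu}p^k s_k)}_{\bm{s},0}$, and then needs a separate monotonicity lemma (differentiating in the noise strength $\lambda$) to show that the extra random term can only pull the expectation below $\tanh(\beta(p^\mu-p^\nu))\le\beta(p^\mu-p^\nu)$. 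You instead condition on $(s_\rho)_{\rho\neq\mu,\nu}$, observe that only the antidiagonal configurations $s_\mu=-s_\nu$ survive against the factor $(s_\mu-s_\nu)$, each with weight $(1-m_0^2)/4$, and bound the resulting difference of hyperbolic tangents by the pointwise $1$-Lipschitz property of $\tanh$. This is shorter and more elementary: it trades the Fourier bookkeeping and the monotonicity lemma for a slightly cruder (but sufficient) mean-value bound. A further small merit of your version is the explicit fallback to Eq.~(\ref{eq:paraNishi}) for $\beta\le 1$, which also covers the borderline case $\beta=1$ where the contraction constant equals $1$ and the paper's own case analysis is silent.
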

\noindent Thanks to Proposition \ref{prop:p} it is sufficient to solve the one-dimensional equation
\begin{equation}\label{eq:new_p}
    \overline{p}=\left\langle s_1 \tanh(\beta \overline{p} \sum_{\mu=1}^M s_\mu)\right\rangle_{\bm{s}}=\frac{\sum_{\bm{s}}s_1\ e^{\beta m_0 \sum_{\mu=1}^M s_\mu } \tanh(\beta \overline{p} \sum_{\mu=1}^M s_\mu)}{\left(2\cosh(\beta m_0)\right)^M},
\end{equation}
where $\overline{p}$ is the value of each component of $\bm{p}$. By studying Eq. (\ref{eq:new_p}) we have the following
\begin{proposition}\label{prop:2}
The points $p^\mu=\bar{p}=\pm m_0(\beta)$, $\forall \mu=1,\ldots, M$, are solutions of eqs. (\ref{eq:selfcon_p}).
\end{proposition}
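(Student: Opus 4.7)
The plan is to directly substitute $\bar p = m_0(\beta)$ into the one-dimensional self-consistency equation (\ref{eq:new_p}) guaranteed by Proposition \ref{prop:p}, and verify that it holds. With this substitution the argument of the hyperbolic tangent matches the exponent in the measure weight, both being $\beta m_0 S$ with $S := \sum_{\mu=1}^M s_\mu$, so the numerator becomes $\sum_{\bm s} s_1\, e^{\beta m_0 S}\tanh(\beta m_0 S)$. The key algebraic observation is the elementary identity
$$
e^{a}\tanh(a) \;=\; e^{a} \;-\; \mathrm{sech}(a),
$$
which follows by rewriting $\tanh(a)=(e^a-e^{-a})/(e^a+e^{-a})$ and $\mathrm{sech}(a)=2/(e^a+e^{-a})$ over a common denominator.

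Applying this identity with $a=\beta m_0 S$ splits the numerator into two pieces. The first piece, $\sum_{\bm s} s_1 e^{\beta m_0 S}$, factorizes completely since $e^{\beta m_0 S}=\prod_\mu e^{\beta m_0 s_\mu}$, giving $2\sinh(\beta m_0)\cdot (2\cosh(\beta m_0))^{M-1}$. Dividing by the denominator $(2\cosh(\beta m_0))^M$ in (\ref{eq:new_p}) leaves $\tanh(\beta m_0)$, which equals $m_0$ by the Curie–Weiss self-consistency defining $m_0(\beta)$, namely the stationarity condition for $\log 2 + \log\cosh(\beta x) - \beta x^2/2$ at $x=m_0$. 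The second piece, $\sum_{\bm s} s_1\,\mathrm{sech}(\beta m_0 S)$, vanishes by a parity argument: under the global flip $\bm s \mapsto -\bm s$, the factor $s_1$ changes sign while $\mathrm{sech}(\beta m_0 S)$ is invariant (since $\mathrm{sech}$ is even and $S \mapsto -S$), so the sum is its own negative and therefore zero. Combining the two contributions yields exactly $\bar p = m_0$.

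The case $\bar p = -m_0$ then follows immediately: replacing $\bar p \to -\bar p$ on both sides of (\ref{eq:new_p}) flips the argument of $\tanh$, and the oddness of $\tanh$ produces an overall sign on the right-hand side, mapping the solution $m_0$ to the solution $-m_0$. I do not anticipate a serious obstacle; the only non-routine step is spotting the identity $e^a\tanh a = e^a-\mathrm{sech}\,a$, which is what turns the problematic product $e^{\beta m_0 S}\tanh(\beta m_0 S)$ into a factorizable term plus a parity-odd term.
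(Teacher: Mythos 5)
Your proof is correct, and it reaches the conclusion by a genuinely different algebraic route than the paper. The paper first symmetrizes the sum under $\bm{s}\to-\bm{s}$ to replace the weight $e^{\beta m_0 S}$ by $\cosh(\beta m_0 S)$, so that at $\bar{p}=m_0$ the product $\cosh(\beta m_0 S)\tanh(\beta m_0 S)=\sinh(\beta m_0 S)$ appears; it then peels off one spin at a time using the addition formula $\sinh(A+B)=\sinh A\cosh B+\cosh A\sinh B$ (the cross term vanishing by oddness in the peeled spin), telescoping down to $\tanh(\beta m_0)=m_0$. You instead keep the exponential weight and split it via $e^{a}\tanh a=e^{a}-\mathrm{sech}\,a$, which isolates a fully factorizable piece (giving $\tanh(\beta m_0)$ in one step) plus a piece killed by a single global parity argument. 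Both proofs hinge on the same structural coincidence — at $\bar{p}=m_0$ the argument of the $\tanh$ matches the exponent of the tilted measure — and both use the fixed-point equation $m_0=\tanh(\beta m_0)$ and the oddness of the right-hand side in $\bar{p}$ to get the $-m_0$ solution. What your version buys is the elimination of the induction over $\mu$: the factorization of $\sum_{\bm s}s_1 e^{\beta m_0 S}$ over independent spins does in one line what the paper's recursion does in $M$ steps. The paper's version, in exchange, makes more visible the mechanism (marginalizing one example at a time) that it reuses implicitly elsewhere. Your reliance on the reduction to the one-dimensional equation (\ref{eq:new_p}) is the same as the paper's and is justified by the exchangeability of the measure, so no gap there.
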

\noindent To check whether they are actually minimizers one should look at the free energy, obtaining the following
\begin{proposition}\label{prop:3}
For $\beta>1$, $\lambda=0$ and $h=0^+$, the maximum of $g$ is attained in the two symmetric points $p^\mu=\pm m_0(\beta)$, $\forall \mu=1,\ldots, M$.
\end{proposition}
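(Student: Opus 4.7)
The plan is to reduce the $M$-dimensional variational problem $\sup_{\mathbf{p}\in\mathbb{R}^M} g(\mathbf{p};\beta,0,0)$ to a one-dimensional maximization via Proposition \ref{prop:p}, and then analyze the resulting one-variable function by elementary monotonicity arguments. Since $-\beta f=\sup g$ by Theorem \ref{thm:1}, minimizing the free energy is equivalent to maximizing $g$, so the task reduces to showing that the global maximizers of $g$ are exactly $\mathbf{p}=\pm m_0(\beta)\mathbf{1}$.

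First I would verify that the supremum is attained: $g$ is smooth and coercive, since the quadratic term $-\beta\|\mathbf{p}\|^2/2$ dominates $\left\langle\log\cosh(\beta\bm{s}\cdot\mathbf{p})\right\rangle_{\bm{s}}$ (which grows at most linearly in $\|\mathbf{p}\|$). Hence any maximizer is a critical point and, by Proposition \ref{prop:p}, must lie on the diagonal $\mathbf{p}=\bar{p}\mathbf{1}$. I then set
\[
G(\bar{p}) \;:=\; g(\bar{p}\mathbf{1};\beta,0,0) \;=\; \log 2 \;-\; \tfrac{\beta M \bar{p}^{\,2}}{2} \;+\; \left\langle \log\cosh(\beta\bar{p}\,S)\right\rangle_{\bm{s}}, \qquad S:=\sum_{\mu=1}^M s_\mu,
\]
which is even in $\bar{p}$ (as $\cosh$ is even), so it suffices to locate the maximum of $G$ on $[0,\infty)$.

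Next I would study $F(\bar{p}):=G'(\bar{p})/\beta=-M\bar{p}+\left\langle S\tanh(\beta\bar{p}S)\right\rangle_{\bm{s}}$. Clearly $F(0)=0$, and $F(m_0(\beta))=0$ by Proposition \ref{prop:2}. Differentiating,
\[
F'(\bar{p})=-M+\beta\left\langle S^{2}\operatorname{sech}^{2}(\beta\bar{p}S)\right\rangle_{\bm{s}}, \qquad F''(\bar{p})=-2\beta^{2}\left\langle S^{3}\tanh(\beta\bar{p}S)\operatorname{sech}^{2}(\beta\bar{p}S)\right\rangle_{\bm{s}}.
\]
At $\bar{p}=0$, using $\left\langle S^{2}\right\rangle_{\bm{s}}=M+M(M-1)m_0(\beta)^{2}\geq M$, I obtain $F'(0)=M(\beta-1)+\beta M(M-1)m_0(\beta)^{2}>0$ for $\beta>1$. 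For $\bar{p}>0$, the pointwise inequality $S\tanh(\beta\bar{p}S)\geq 0$ gives $S^{3}\tanh(\beta\bar{p}S)=S^{2}\cdot S\tanh(\beta\bar{p}S)\geq 0$, hence $F''\leq 0$ on $(0,\infty)$, i.e.\ $F$ is concave there. Since $F'(\bar{p})\to -M<0$ as $\bar{p}\to\infty$, $F$ first increases, attains a unique maximum, and then decreases to $-\infty$, so it has a unique positive zero, which must be $m_0(\beta)$ by Proposition \ref{prop:2}.

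Combining these facts, $G'>0$ on $(0,m_0(\beta))$ and $G'<0$ on $(m_0(\beta),\infty)$, so $G$ attains its unique maximum on $[0,\infty)$ at $\bar{p}=m_0(\beta)$; by evenness the global maximizers of $g$ on $\mathbb{R}^{M}$ are precisely $\mathbf{p}=\pm m_0(\beta)\mathbf{1}$, proving the claim. The main (though short) obstacle I anticipate is the sign analysis for $F''$: it depends on the \emph{pointwise} inequality $S\tanh(\beta\bar{p}S)\geq 0$, which holds without any appeal to the biased product structure of the law of $\bm{s}$, and together with concavity it is precisely what rules out the existence of spurious non-trivial critical points beyond $\pm m_0(\beta)$.
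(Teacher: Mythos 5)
Your proof is correct, and it follows the same skeleton as the paper's: reduce to the diagonal $\mathbf{p}=\bar{p}\,\bm{1}$ via Proposition \ref{prop:p}, use evenness to restrict to $\bar{p}\ge 0$, and identify the unique positive critical point with $m_0(\beta)$ via Proposition \ref{prop:2}. The difference lies in the one-dimensional uniqueness step. The paper views the free energy as a function of $\bar{p}^2$ and \emph{asserts} that $df/d\bar{p}^2$ is increasing (convexity in $\bar{p}^2$); together with the negative derivative at the origin for $\beta>1$ this pushes the minimum away from zero, and the identification with $\pm m_0(\beta)$ is left implicit. You instead prove that $F=G'/\beta$ is concave on $(0,\infty)$ from the pointwise inequality $S^{3}\tanh(\beta\bar{p}S)\ge 0$, and combine $F(0)=0$, $F'(0)>0$ and $F\to-\infty$ to conclude that $G'$ has a unique positive zero, which must be $m_0(\beta)$ by Proposition \ref{prop:2}. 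The two monotonicity statements are not the same (the paper's amounts to $\meanv{S\tanh(\beta\bar{p}S)}/\bar{p}$ being decreasing, which follows from $\tanh x\ge x\,(1-\tanh^2 x)$, while yours is concavity of the derivative), but both yield uniqueness of the positive critical point. Your route has the advantage of actually supplying the proof of the monotonicity that the paper only claims, and of making the final identification with $\pm m_0(\beta)\bm{1}$ explicit; it also records the coercivity of $g$ needed to guarantee that a global maximizer exists and is a critical point, a point the paper glosses over.
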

\noindent The previous results show that when the temperature is low enough, i.e. $\beta>1$, both the example magnetization $m_0$ and the overlap of the system with the examples $\bar{p}$ are different from zero. This means that the examples are all macroscopically aligned with the signal  $\bxip$ (i.e. they are largely informative) and that $\bxi$ is macroscopically aligned with all the examples. As a consequence the system must be macroscopically aligned with the signal (learning is possible and easy) and in fact in this regime $m\neq 0$.  It is interesting to note that while $\bar{p}$ does not depend on $M$, i.e. the alignment with the examples only depends from the posterior temperature, the system magnetization $m$, i.e. the alignment with the signal, increases with the size of the dataset. In fact its value is simply obtained as
\begin{equation}
    m=\meanv{\tanh(\beta \bs\cdot\bm{p})}_{\bm{s}}
    =\frac{\sum_{\bm{s}}\ e^{\beta M m_0 m(\bs) }\tanh(\beta M \overline{p} m(\bs) )}{\left(2\cosh(\beta m_0)\right)^M },
\end{equation}
where $m(\bs)=M^{-1}\sum_{\mu=1}^M s^\mu$.
We report in Figure \ref{fig:comparison_mag} the value of the magnetization as a function of the temperature for different size of the dataset $M$.
\begin{figure}[ht]
\centering
\includegraphics[width=0.47\columnwidth]{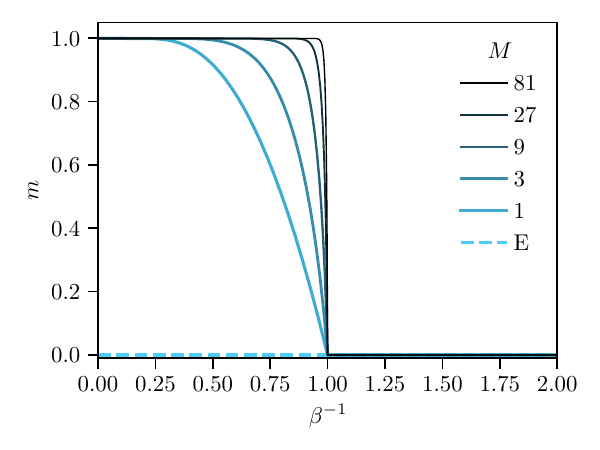}
\includegraphics[width=0.47\columnwidth]{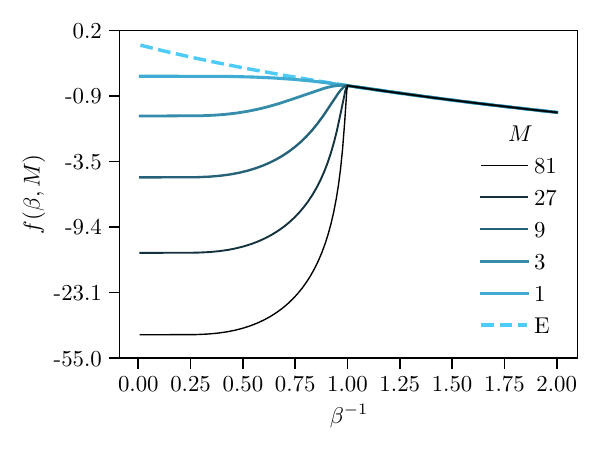}
\caption{
Learning performance with a finite $M$ dataset size. \textit{Left:} System's magnetization, i.e. the overlap between teacher and student pattern is evaluated as a function of the temperature $\beta^{-1}$. The overlap increases with the number of examples $M$ as long as the system is below the critical temperature.  \textit{Right:} System's free energy as a function of $\beta^{-1}$. The free energy corresponding to solutions with $m>0$ are painted in solid lines, while the ergodic (E) free energy, i.e. the one corresponding to $m=0$, appears with a dashed line. As long as $\beta^{-1}<1$, the global minimum of the free energy is the state where the machine can learn the original pattern.
}
\label{fig:comparison_mag}
\end{figure}
\noindent As $M$ increases, it is evident that the magnetization tends to 1 for $\beta>1$. The system's free  energy
\begin{equation}
    f=\frac{M}{2} \overline{p}^2 - \frac{1}{\beta} \frac{\sum_{\bm{s}}e^{\beta M m_0 m(\bs)}\ln 2\cosh(\beta M\overline{p} m(\bs))}{\left(2\cosh(\beta m_0)\right)^M}
\end{equation}
is also displayed to show 
the instability of the solution $\overline{p}=0$ at low temperature.



\subsection{Learning by generalization from many noisy examples}

In the previous Section we have shown that when the examples are highly correlated with the signal, i.e. at $\beta>1$, learning is possible and easy, with any finite number of examples $M>0$. Conversely when the examples are poorly correlated with the signal, i.e. $\beta <1$, there is not enough information in the posterior distribution to retrieve the original pattern. In this Section we show that learning is possible also in the low correlation regime (high temperature) as long as we consider a larger dataset. In particular we consider the case in which the machine can leverage on an extensive number of examples, i.e.
\begin{equation}
    \lim_{N\to\infty} M/N = \gamma >0.
\end{equation}
In this regime the free energy can only be derived exploiting the replica method under the replica symmetric approximation \cite{nishimori2001statistical}
from which one gets the following
\begin{ansatz}\label{conj:RS}
For $\lambda=0$, $\bm{h}=0$, $\beta<1$ and $\gamma>0$, the limiting  free energy of the posterior distribution is
\begin{equation}
    -\beta f = \lim_{N\to\infty}\frac 1 N \mathbb{E}_{\beta,\bm{h}}Z(\bS)=\operatorname{Extr}_{m,\hat{m},q,\hat{q}} f (m,\hat{m},q,\hat{q}),
\end{equation}
where 
\begin{align}\label{eq:f}
    f (m,\hat{m},q,\hat{q}) = &-\frac{\gamma}{2} \left[ \ln\big((1-\beta)(1-\beta+\beta q)\big) -\frac{\beta q }{(1-\beta)}+\frac{ \beta^2 (q^2-m^2)}{(1-\beta)(1-\beta+\beta q)}\right]\nonumber \\
    &+ \frac{\hat{q}q}{2} - \hat{m}m - \frac{\hat{q}}{2} + \int D\mu(z)\ln 2\cosh(\hat{m}+z\sqrt{\hat{q}}). 
\end{align}
Thus the saddle point equations read as
\begin{align}
    \label{eq:m}m&=\int D\mu(z) \tanh(\hat{m}+z\sqrt{\hat{q}})\\ \label{eq:mh}\hat{m}&= \frac{\gamma \beta^2 m}{(1-\beta)(1-\beta+\beta q)}\\
    \label{eq:q}q &= \int D\mu(z) \tanh^2(\hat{m}+z\sqrt{\hat{q}})\\
    \label{eq:qh}\hat{q} &=  \frac{\gamma \beta^3(m^2-q^2)}{(1-\beta)(1-\beta+\beta q)^2}+\frac{\gamma \beta^2 q}{(1-\beta)(1-\beta+\beta q)}.
\end{align}
\end{ansatz}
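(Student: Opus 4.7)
The plan is to derive the conjectured formula via the (non-rigorous) replica method, which is the standard tool when the quenched average of $\log Z$ must be computed in presence of an extensive number of correlated patterns. I would start from the replica identity $\meanv{\log Z}=\lim_{n\to 0} n^{-1}\log\meanv{Z^n}$ and use the gauge observation proved earlier in the paper to replace $\bxip$ by $\bm 1$ and draw the patterns directly from $P^{CW}_{\beta,\bm 0}$. Thus the task reduces to computing $\meanv{Z^n}_{P^{CW}_{\beta,\bm 0}}$ for integer $n$, replicating the student spins into $n$ copies $\bxi^1,\dots,\bxi^n$ and expressing the Hopfield Hamiltonian, up to negligible diagonal terms, as a sum of per-pattern quadratic forms $(\bs^\mu\cdot\bxi^a)^2/(2N)$.

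The core algebraic step is a double Hubbard--Stratonovich linearization: for each pattern $\mu$ one introduces an $n$-vector of Gaussian fields $y^{\mu,a}$ to unfold the Hopfield quadratic form, and a single Gaussian $u^\mu$ to handle the CW ferromagnetic interaction $(\sum_i s^\mu_i)^2/(2N)$ coming from the pattern law. Once linearized, the sum over $s^\mu_i\in\{\pm1\}$ factorizes site by site into $2\cosh$ factors whose argument is $O(N^{-1/2})$ (since $\beta<1$ forces $m_0=0$); a second-order cumulant expansion then produces the $M=\gamma N$ pattern contributions as a Gaussian action in the $y^{\mu,a}$ and $u^\mu$ whose covariance is controlled by the replica overlap matrix $Q=(q^{ab})$ and the magnetization vector $(m^a)$, with $q^{ab}:=N^{-1}\sum_i\xi^a_i\xi^b_i$ and $m^a:=N^{-1}\sum_i\xi^a_i$.

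Next I would impose these order parameters through delta functions with conjugate multipliers $\hat q^{ab},\hat m^a$ and plug in the RS ansatz $q^{ab}=q$ for $a\ne b$, $q^{aa}=1$, $m^a=m$, $\hat q^{ab}=\hat q$, $\hat m^a=\hat m$. Carrying out the Gaussian integrals on the pattern side yields a $\log\det$ of an effective matrix on replicas and the CW longitudinal mode; under RS this matrix is diagonalized into $(n-1)$ degenerate transverse eigenvalues equal to $(1-\beta+\beta q)$ together with a $2\times 2$ block mixing the replica magnetization direction and the CW mode, whose determinant is the source of the $(1-\beta)$ factor in (\ref{eq:f}). Expanding to first order in $n$ reproduces the bracket multiplying $-\gamma/2$. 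On the spin side a further Hubbard--Stratonovich on $\tfrac{\hat q}{2}(\sum_a\xi^a)^2$ turns the replicated single-site trace into $\int D\mu(z)\,[2\cosh(\hat m+z\sqrt{\hat q})]^n$, and the $n\to 0$ limit of its logarithm yields the final term of (\ref{eq:f}). Stationarity with respect to $m,\hat m,q,\hat q$ then produces equations (\ref{eq:m})--(\ref{eq:qh}).

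The principal obstacle is the usual one for replica derivations: the unjustified exchange of $n\to 0$ with $N\to\infty$ and the a priori validity of the RS ansatz. A strong consistency check I would perform is that (\ref{eq:m})--(\ref{eq:qh}) collapse on the Nishimori line $q=m$, $\hat q=\hat m$ to a two-dimensional fixed-point system, as expected in a Bayes-optimal teacher-student setting; this in particular rules out, at least on that line, a continuous instability toward replica symmetry breaking. A rigorous upgrade of the conjecture could be pursued via Guerra--Toninelli interpolation or, more aptly, an adaptive interpolation scheme tailored to planted Hopfield-like models, but matching upper and lower bounds is genuinely delicate here because the Hamiltonian depends quadratically on the planted signal through the patterns, so this lies beyond the heuristic derivation proposed above.
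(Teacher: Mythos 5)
Your derivation is correct and follows essentially the same route as the paper's Appendix: replica trick, a double Hubbard--Stratonovich linearization (one Gaussian per replica for the Hebbian coupling and one per pattern for the Curie--Weiss interaction), second-order expansion of the resulting $\ln\cosh$, an RS $\log\det$ whose longitudinal/CW block produces the $(1-\beta)$ factor, and a single-site trace yielding $\int D\mu(z)\ln 2\cosh(\hat m + z\sqrt{\hat q})$. The only difference is that the paper carries out the more general mismatched computation with a condensed-example overlap $p$ and its conjugate $\hat p$ before specializing to $\hat\beta=\beta<1$ and $p=0$, whereas you treat all $M$ patterns uniformly from the start --- which is legitimate in the regime of this conjecture since no condensation occurs there.
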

\noindent The solution of the saddle point equations have a physical interpretation in terms of the model's order parameters according to the following
\begin{ansatz}\label{conj:RSparameters}
Given $(m,q)$ solutions of Eqs. (\ref{eq:m},\ref{eq:q}) it holds
\begin{eqnarray}
m &=&\lim_{N\to\infty}\meanv{\frac{\bm{1}\cdot\bxi}{N}}=  \lim_{N\to\infty} \meanv{Q(\bxip,\bxi)}^{\hat{}}\\
q &=& \lim_{N\to\infty} \meanv{Q(\bxi^1,\bxi^2)}=\lim_{N\to\infty} \meanv{Q(\bxi^1,\bxi^2)}^{\hat{}},
\end{eqnarray}
where $\bxi^1,\bxi^2$ are two replicas of the systems, i.e. two independent configurations sampled from the posterior distribution (\ref{eq:hopfield}) with the same data $\bS$. 
\end{ansatz}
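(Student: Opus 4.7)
The plan is to derive the identifications via the source-term / linear response technique, carried out inside the same replica computation that produces Conjecture \ref{conj:RS}. The idea is that both $m$ and $q$ already appear in the RS ansatz of the replica calculation as the natural overlap parameters, so the task is to show that they indeed coincide with the physical overlaps $\langle Q_N(\bxip,\bxi)\rangle$ and $\langle Q_N(\bxi^1,\bxi^2)\rangle$ in the $N\to\infty$ limit.

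For the magnetization, I would first add a source term $\lambda\sum_i\xi_i$ to the posterior Hamiltonian, exactly as in the finite-$M$ computation, obtaining a perturbed free energy $f(\beta,\gamma;\lambda)$. Repeating the replica steps of Appendix \ref{appendix:replica} with this extra linear term, one finds that the only modification to the RS variational functional \eqref{eq:f} is the replacement $\hat m \to \hat m + \lambda$ in the single-site effective action $\int D\mu(z)\log 2\cosh(\hat m + z\sqrt{\hat q})$. By the linear response theorem and the Gauge invariance of $\meanv{\cdot}$ established earlier,
\begin{equation*}
\lim_{N\to\infty}\meanv{Q_N(\bxip,\bxi)} \;=\; \lim_{N\to\infty}\meanv{\tfrac{\bm 1\cdot\bxi}{N}} \;=\; \partial_\lambda(-\beta f)\big|_{\lambda=0} \;=\; \int D\mu(z)\tanh(\hat m + z\sqrt{\hat q}),
\end{equation*}
which by \eqref{eq:m} equals $m$. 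Self-averaging of the magnetization, needed to pass from the derivative of the quenched free energy to a pointwise statement, follows from concavity/convexity of the finite-$N$ free energy in $\lambda$ together with standard concentration arguments for Hopfield-type models.

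For the two-replica overlap I would introduce a coupling $\mu\sum_i \xi_i^{(1)}\xi_i^{(2)}$ between two independent copies of the system sharing the same data $\bS$, and compute the free energy of the coupled pair by the same replica method. Under the RS ansatz the cross-replica overlap $q_{ab}$ for $a\neq b$ in the replica calculation is exactly the saddle point value $q$, and differentiating the coupled free energy at $\mu = 0$ reproduces $\lim_N\meanv{Q_N(\bxi^1,\bxi^2)} = q$. Equivalently, one can read off $q$ directly from the replica construction, where $q_{ab} = \lim_{N\to\infty} N^{-1}\sum_i \meanv{\xi_i^{(a)}\xi_i^{(b)}}$ by definition of the replicated partition function, and then invoke the RS assumption $q_{ab}=q$ for all $a\neq b$.

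The main obstacle is that, precisely because Conjecture \ref{conj:RS} is derived via replicas (hence not rigorous), the identification above inherits the same non-rigorous character: it relies on the analytic continuation $n\to 0$, on the dominance of the RS saddle, and on the commutation of the $N\to\infty$ limit with the source derivatives. In a fully rigorous setup one would need either a Guerra-type interpolation producing upper/lower bounds that match at the RS saddle, or a Nishimori-aware adaptive interpolation exploiting that the student's prior matches the teacher's generative model (so replica symmetry is not broken on the Nishimori line). Either way, the present statement should be read as the natural physical interpretation of the RS saddle-point variables, consistent with the linear-response computation sketched above.
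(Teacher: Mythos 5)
Your proposal is correct in substance and, at the level of rigor the paper itself adopts (the statement is a \emph{Conjecture} obtained by replicas, not a theorem), it is essentially equivalent to the paper's justification, though your primary route differs slightly. The paper's Appendix \ref{appendix:replica} does not introduce source terms: it defines $m_a=N^{-1}\sum_i\xi_i^a$, $q_{ab}=N^{-1}\sum_i\xi_i^a\xi_i^b$ (and $p_a$) directly as delta-function constraints in the density of states $\mathcal{D}(\bm{q},\bm{m},\bm{p})$, so the saddle-point values are, by construction of the replicated measure, the limiting quenched averages of the empirical magnetization and of the two-replica overlap; this is exactly the ``read off directly from the replica construction'' alternative you mention at the end of your second paragraph, and it is what the paper actually does. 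Your leading route via a source field $\lambda\sum_i\xi_i$ and linear response is the mechanism the paper uses in the finite-$M$ regime (the unnumbered Proposition following Theorem \ref{thm:1}), and it is equivalent here: after the gauge transformation $\meanv{Q_N(\bxip,\bxi)}=\meanv{N^{-1}\bm{1}\cdot\bxi}$, the shift $\hat m\to\hat m+\lambda$ in the single-site term of (\ref{eq:f}) gives $\partial_\lambda(-\beta f)\vert_{\lambda=0}=\int D\mu(z)\tanh(\hat m+z\sqrt{\hat q})=m$ by (\ref{eq:m}). What each approach buys: the delta-function construction makes the identification of \emph{all} order parameters (including $p$ in the mismatched case) automatic and simultaneous, while the source-term route isolates each observable and makes explicit which concentration/convexity properties would be needed for a rigorous statement. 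You are right that neither argument is rigorous — the paper agrees, which is why the statement is labelled a Conjecture and is supported separately by the exact gauge (Nishimori) identity $\meanv{Q_N(\bxi^1,\bxi^2)}=\meanv{Q_N(\bxip,\bxi)}$ as a consistency check on the solution $m=q$.
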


The details about the derivation of Conjectures \ref{conj:RS} and \ref{conj:RSparameters} are provided in Appendix \ref{appendix:replica}. As in Theorem \ref{thm:1} the free energy is given as the solution of a variational principle in terms of the model' s  order parameters: in that case the overlap with the examples $p^\mu\sim Q(\bm{s}^\mu,\bxi)$. In this case, because of the high temperature ($\beta<1$) the system never aligns with any specific examples ($\bm{p}=0$) and this overlap does not play a role in the free energy principle. Nevertheless, 
the signal (toward the teacher pattern) carried by an extensive number of examples can become macroscopic and could bring to non zero system's magnetization and system's overlap ($m$ and $q$) that in fact emerge as the two natural order parameters. Eqs. (\ref{eq:m}) and (\ref{eq:q}) are similar to those of the standard Hopfield model \cite{coolen2005theory}, with a random gaussian field of a different variance, still proportional to the load $\gamma$. Moreover the signal term $\hat{m}$ doesn't point towards the examples but towards the teacher pattern and it is proportional to $\gamma$, thus showing the beneficial effect of the training set size.  
 
It is important to recall that we are studying the problem in which the student machine is exactly as the teacher one (same architecture) and also the temperatures are the same. Therefore by construction the model satisfies the Nishimori conditions, in particular
\begin{equation}
\meanv{Q(\bxi^1,\bxi^2)}^{\hat{}}=\meanv{Q(\bxip,\bxi)}^{\hat{}}.
\end{equation}
Infact we have at $\lambda=0$ and $\bm{h}=0$ that 
\begin{equation}
\begin{split}
    \meanv{Q(\bxip,\bxi)}^{\hat{}}
    & = 
   2^{-N}\sum_{\bxip,\bS,\bxi} Q(\bm{\hat{\xi}},{\bm{\xi}}) \hat{P}(\bm{\xi}|\bm{\mathcal{S}}) P(\bS|\bxip) \\
    &= 
    \sum_{\bxip,\bS,\bxi}  Q(\bm{\hat{\xi}},{\bm{\xi}}) \hat{P}(\bm{\xi}|\bm{\mathcal{S}}) P(\bm{\hat{\xi}}|\bm{\mathcal{S}}) P(\bm{\mathcal{S}})\\
    &=2^{-N}\sum_{\bS,\bm{\xi}^1,\bm{\xi}^2} Q(\bm{\xi}^1, \bm{\xi}^2) P(\bm{\xi}^1|\bm{\mathcal{S}}) P(\bm{\xi}^2|\bm{\mathcal{S}}) P(\bm{\mathcal{S}}|\bxip) 
    =:\meanv{Q(\bxi^1,\bxi^2)}^{\hat{}}.
\end{split}    
\end{equation}
For this reason, according to Conjecture 2,  we expect that the solution of the saddle point equations satisfies $m=q$ and at the same time, see Eqs. (\ref{eq:mh},\ref{eq:qh}) $\hat{m}=\hat{q}$. It is easy to show that this is in fact a solution of Eqs. (\ref{eq:m}-\ref{eq:qh})
by using the identity
\begin{align}
   \int D\mu(z) \tanh(\hat{m}+z\sqrt{\hat{m}}) = \int D\mu(z) \tanh^2(\hat{m}+z\sqrt{\hat{m}}).
\end{align}
We checked numerically this is a stable solution. This condition indicates the absence of a spin-glass region ($m=0$, $q>0$). Analogously it is easy to show \cite{nishimori2001statistical} that overlap and magnetization have the same distribution. The expected self-averaging of the system's magnetization motivates the belief that the model is replica symmetric and that conjectures \ref{conj:RS} and \ref{conj:RSparameters} therefore hold \cite{alberici2021multi,alberici2021solution}. 
\begin{figure}[ht]
\centering
\includegraphics[width=.5\textwidth]{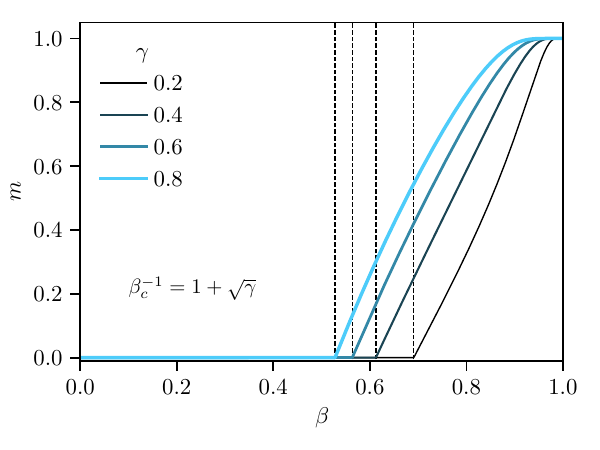}\hfil
\includegraphics[width=.5\textwidth]{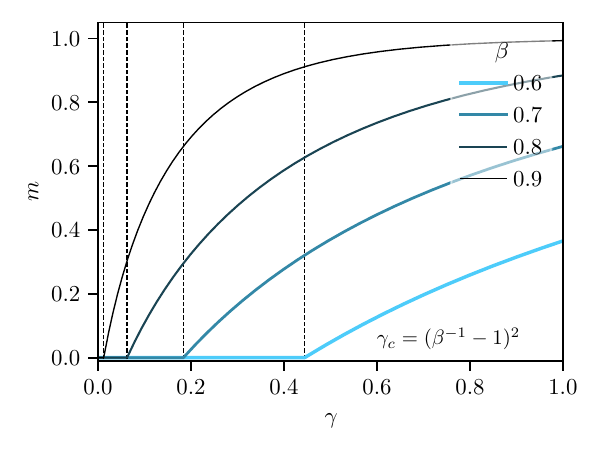}
\caption{Learning performance with a noisy ($\beta<1$) but extensive  ($M=\gamma N$) dataset on the \textit{Nishimori} line. \textit{Left}: the system's magnetisation $m$ is shown as a function of the inverse temperature $\beta$  and for different dataset size $\gamma$. \textit{Right}: the magnetisation $m$ is shown as a function of $\gamma$ and different inverse temperatures $\beta$. The inferred pattern's quality displays a second order phase transition. Moreover it increases with $\gamma$ and decreases with the dataset noise $\beta^{-1}$.}\label{fig:alta_temperatura}
\end{figure}
Figs. \ref{fig:alta_temperatura} show the value of the magnetization, i.e. the learning performance, as a function of $\beta$ and $\gamma$. It is evident the occurrence of a  second order phase transition from a paramagnetic region where the only solution is $m=q=0$ to a ferromagnetic region where $m=q>0$ and learning is feasible. The phase transition occurs at a critical temperature $\beta_c(\gamma)$ if we fix the size of the dataset $\gamma$ or equivalently at a critical size $\gamma_c(\beta)$ for a given level of the temperature. The critical line can be obtained analytically by studying the reduced equation 
\begin{equation}
    m=\int D\mu(z) \tanh(\hat{m}(m)+z\sqrt{\hat{m}(m)}) 
\end{equation}
where 
\begin{equation}
\hat{m}(m)= \frac{\gamma \beta^2 m}{(1-\beta)(1-\beta+\beta m)}.
\end{equation}
By expanding for small values of the magnetization we get
\begin{equation}
m= \frac{\gamma \beta^2}{(1-\beta)^2} m + o(m)
\end{equation}
from which the bifurcation must be at $\gamma \beta^2/(1-\beta)^2 =1$, i.e. at
\begin{equation}
    \beta_c^{-1}(\gamma):=1+\sqrt{\gamma}\ \ \ \ \ \gamma_c(\beta):=\frac{1-\beta^2}{\beta^2}.
\end{equation}

As expected the critical size is an increasing function of the data temperature $\beta^{-1}$, thus of the data correlation with the signal. What is interesting is that, despite we are in the regime in which each example $s^\mu$ does not share macroscopic correlation with the signal ($m_0(\beta)=0$), still the machine is able to retrieve it $m>0$ as soon as the dataset is sufficiently large. It means that the dataset contains enough information but divided in many ( an extensive number of examples) small (poorly correlated examples) pieces.

\begin{figure}[ht]
\centering
\includegraphics[width=9cm]{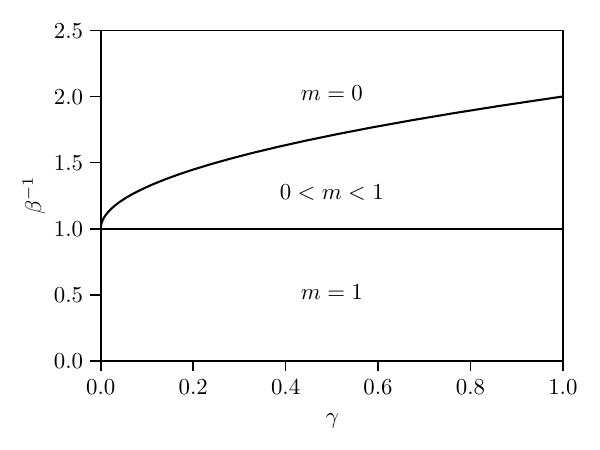}
\caption{Phase diagram of the model on  the \textit{Nishimori} line. For $\beta^{-1} > 1+\sqrt{\gamma}$, the student machine is in the paramagnetic phase with $ m = 0$, where learning is impossible. Conversely it enters a learning phase where it can infer the original pattern by generalization from a  sea of corrupted examples that the teacher provides. For $\beta^{-1}<1$ each example is highly informative and the learning performance is optimal ($m=1$).}
\label{fig:diagram_dual}
\end{figure}
\noindent In Fig. \ref{fig:diagram_dual} the phase diagram is shown. At low temperature we know from the previous section that learning is always possible, in particular a perfect retrieval of the teacher's pattern ($m=q=1$) is achieved when $M\to \infty$. At high temperature, i.e. poorly informative dataset, a paramagnetic region where learning is not possible is separated from a ferromagnetic region  where the signal inference is still possible leveraging on a sufficiently large dataset.

It is interesting to note that the critical line coincides with the paramagnetic to spin-glass transition line in the standard Hopfield model: the two systems becomes frozen in  the same moment, when the signal from the patterns become macroscopic and prevails w.r.t. the temperature noise.  In the standard Hopfield model different patterns led to different signals because they were independent and unbiased. For this reason the system got confused by increasing their number (the network load) and enters a spinglass regime.  In our model each example carries a vanishing but non zero bias toward the signal, thus this bias becomes macroscopic by increasing extensively the size of the dataset and the system enters a ferromagnetic, ordered phase, where learning is possible.

\subsection{Inference temperature vs dataset noise}

The assumption that the student's machine is exactly equal to the teacher one is not realistic. The interesting research question is in fact related to the representation performance of a particular learning machine in relation to different possible data structures. To this aim, in this section we investigate one possible miss-matching between data (i.e. teacher machine) and student machine: the one related to the use of an inference temperature which differs from the real  generating temperature of the data. We therefore assume that the training set is generated at an inverse temperature $\hat{\beta}$, i.e.
\begin{equation}
    P(\bS|\bxip)= \prod_{\mu=1}^M z^{-1}\exp\left(\frac{\hat{\beta}}{N}\sum_{i<j}^N\hat{\xi_i}\hat{\xi_j}s^\mu_is^\mu_j\right),
\end{equation}
while the student patterns are still sampled at an inverse temperature $\beta$ as in Eq. (\ref{eq:hopfield}), which represents the posterior distribution of the learning problem with a miss-matched prior. It represents the more realistic situation in which the dataset noise $\hat{\beta}^{-1}$ is unknown and the machine is trained with a different weights regularization $\beta$.

\noindent As long as $M$ is finite, following the proof of Theorem \ref{thm:1} at $\lambda=0$, it holds that in the ferromagnetic gauge
\begin{equation}
    -\beta f =\lim_{N\to\infty} \frac 1 N \mathbb{E}_{\hat{\beta},\bm{0}^+} \log Z(\bS)=\sup_{\bm{p}\in\mathbb{R}^M} \hat{g}(\bm{p}).
\end{equation}
The free energy density trial function
\begin{equation}\label{eq:freemism}
\hat{g}(\bm{p}):= \log 2 -\frac{\beta  \bm{p}^2}{2} +\meanv{\log\cosh \left( \beta \bs\cdot\bm{p} \right)}_{\bs,\hat{\beta}},
\end{equation}
where now the random vector $\bs\in \{-1,1\}^M$ has i.i.d. random entries with mean $m_0(\hat{\beta})$, depending on the generating temperature.
By extremizing Eq. (\ref{eq:freemism}), $\bm{p}$ has to be a solution of
\begin{equation}\label{eq:self_mism}
    \bm{p}=\meanv{\bs \tanh(\beta \bs\cdot\bm p)}_{\bs,\hat{\beta}}=
\frac{\sum_{\bm{s}} \bs\  e^{\hat{\beta} m_0\bm{1}\cdot \bm{s}} \tanh(\beta \bm{s}\cdot\bm{p})}{\left(2 \cosh(\hat{\beta} m_0)\right)^M}, 
\end{equation}
from which, the learning performance can be derived as
\begin{eqnarray}\label{eq:m_mism}
\lim_{N\to\infty}\meanv{Q(\bxip,\bxi)}^{\hat{}}=m= \meanv{\tanh(\beta \bs\cdot\bm p)}_{\bs,\hat{\beta}}=
\frac{\sum_{\bm{s}} e^{\hat{\beta} m_0\bm{1}\cdot \bm{s}} \tanh(\beta \bm{s}\cdot\bm{p})}{\left(2 \cosh(\hat{\beta} m_0)\right)^M}.
\end{eqnarray}
Similarly to Eq. (\ref{eq:paraNishi}) it holds the following
\begin{proposition}\label{prop:5}
As long as 
\begin{equation}
\beta^{-1}>1+(M-1)m_0^2(\hat{\beta})
\end{equation}
the only solution of Eqs. (\ref{eq:self_mism}) is $\bm{p}=0$. As a consequence from Eq. (\ref{eq:m_mism}) $\bm{m}=0$.
\end{proposition}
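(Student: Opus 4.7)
The plan is to adapt the argument behind Eq.~(\ref{eq:paraNishi}) to the mismatched setting, the only real novelty being that the entries of $\bs$ now have nonzero mean $m_0(\hat{\beta})$, so the second moment of $\bs\cdot\bm{p}$ acquires a cross term that needs to be controlled. Concretely, I would take the inner product of both sides of the self-consistency Eq.~(\ref{eq:self_mism}) with $\bm{p}$ to obtain
\begin{equation*}
\bm{p}^2 \;=\; \bigl\langle (\bs\cdot\bm{p}) \tanh(\beta\, \bs\cdot\bm{p})\bigr\rangle_{\bs,\hat{\beta}},
\end{equation*}
and then use the elementary bound $y\tanh(\beta y) \le \beta y^2$ (which is strict unless $y=0$) to conclude $\bm{p}^2 \le \beta \bigl\langle (\bs\cdot\bm{p})^2 \bigr\rangle_{\bs,\hat{\beta}}$.

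The next step, which is the only part requiring the mismatch assumption to be used nontrivially, is to compute the second moment. Since the components $s_\mu$ are i.i.d.\ with $\langle s_\mu \rangle = m_0$ and $\langle s_\mu^2\rangle = 1$, I would split diagonal and off-diagonal terms:
\begin{equation*}
\bigl\langle (\bs\cdot\bm{p})^2 \bigr\rangle_{\bs,\hat{\beta}} \;=\; \sum_{\mu} p_\mu^2 + m_0^2 \sum_{\mu\neq\nu} p_\mu p_\nu \;=\; (1-m_0^2)\,\bm{p}^2 + m_0^2 \Bigl(\textstyle\sum_\mu p_\mu\Bigr)^2.
\end{equation*}
Then Cauchy--Schwarz gives $\bigl(\sum_\mu p_\mu\bigr)^2 \le M\bm{p}^2$, so altogether
\begin{equation*}
\bigl\langle (\bs\cdot\bm{p})^2 \bigr\rangle_{\bs,\hat{\beta}} \;\le\; \bigl[1+(M-1)m_0^2\bigr]\,\bm{p}^2.
\end{equation*}

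Combining the two bounds yields $\bm{p}^2 \le \beta\bigl[1+(M-1)m_0^2(\hat{\beta})\bigr]\bm{p}^2$, so the hypothesis $\beta^{-1}>1+(M-1)m_0^2(\hat{\beta})$ forces $\bm{p}=\bm{0}$. The claim $\bm{m}=0$ then follows immediately by substituting $\bm{p}=\bm{0}$ into Eq.~(\ref{eq:m_mism}), since $\tanh(0)=0$. I do not foresee a real obstacle here: the argument is a direct mean-shifted generalization of the Nishimori-line computation (\ref{eq:paraNishi}), with the factor $1+(M-1)m_0^2$ arising precisely as the operator norm (on the span of $\bm{1}$ vs.\ its complement in $\mathbb{R}^M$) of the covariance matrix $\mathrm{Cov}(\bs)=(1-m_0^2)\,\mathbb{I} + m_0^2\,\bm{1}\bm{1}^{\!\top}$, whose largest eigenvalue is exactly $1+(M-1)m_0^2$. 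In the matched Nishimori case $\hat{\beta}=\beta<1$ one has $m_0=0$ and the bound collapses to $\beta<1$, consistently with (\ref{eq:paraNishi}).
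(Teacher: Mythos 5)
Your proposal is correct and follows essentially the same route as the paper's own proof: contract the self-consistency equation with $\bm{p}$, bound $\tanh$ by its argument, evaluate $\meanv{s_\mu s_\nu}_{\bs,\hat{\beta}}=\delta_{\mu\nu}+(1-\delta_{\mu\nu})m_0^2(\hat{\beta})$, and apply Cauchy--Schwarz to $(\sum_\mu p^\mu)^2$ to obtain $\bm{p}^2\leq\beta(1+(M-1)m_0^2(\hat{\beta}))\bm{p}^2$. The eigenvalue interpretation of the covariance matrix is a nice additional remark but the computation is identical.
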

\begin{proof}
It is sufficient to see that
\begin{eqnarray}
\bm{p}^2&=&\meanv{\bm{p}\cdot\bs \tanh(\beta \bs\cdot\bm p)}_{\bs,\hat{\beta}}\leq \beta \meanv{(\bm{p}\cdot\bs)^2}_{\bs,\hat{\beta}} =\beta \sum_{\mu,\nu}p^\mu p^\nu \meanv{s_\mu s_\nu}_{\bs,\hat{\beta}}\nonumber\\
&=&\beta \sum_{\mu,\nu}p^\mu p^\nu (\delta_{\mu\nu}+(1-\delta_{\mu\nu}) m^2_0(\hat{\beta}))
=\beta(1-m_0^2(\hat{\beta}))\bm{p}^2+\beta m_0^2(\hat{\beta})(\sum_\mu p^\mu)^2\nonumber\\
&\leq&\beta(1+(M-1)m_0^2(\hat{\beta}))\bm{p}^2.
\end{eqnarray}
\end{proof}
As soon as the inference temperature drops below the threshold provided by Proposition \ref{prop:5} other solutions of Eqs. (\ref{eq:self_mism}) appear according to the following 
\begin{proposition}\label{prop:mism_inst}
As long as $\beta^{-1}<1+(M-1)m_0^2(\hat{\beta})$ the global maximum of $\hat{g}(\bm{p})$ is attained far from $0$. Moreover there exist solutions of Eqs. (\ref{eq:self_mism}) of the form $p^\mu=\pm \bar{p}$, $\forall \mu=1,\ldots, M$ where $\bar{p}>0$ is unique.
\end{proposition}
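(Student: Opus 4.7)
The plan is to address the two claims in turn. For the first, I will show that $\bm p=\bm 0$ fails to be a local maximum of $g(\cdot;\hat\beta,\beta)$ in the stated regime; since $g$ is continuous and coercive (the quadratic $-\beta|\bm p|^2/2$ dominates the at-most-linearly growing $\meanv{\log\cosh(\beta\bs\cdot\bm p)}$), the global supremum is attained and must therefore lie away from $\bm 0$. The instability follows from a Hessian computation: using $\meanv{s_\mu s_\nu}_{\bs,\hat\beta}=\delta_{\mu\nu}+(1-\delta_{\mu\nu})m_0^2(\hat\beta)$, one finds $\partial^2_{p^\mu p^\nu}g|_{\bm p=\bm 0}=\beta(\beta-1)\delta_{\mu\nu}+\beta^2 m_0^2(1-\delta_{\mu\nu})$, whose ferromagnetic eigenvalue along $(1,\dots,1)$ is $\beta[\beta(1+(M-1)m_0^2)-1]$, strictly positive exactly when $\beta^{-1}<1+(M-1)m_0^2$.

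For the second claim, the permutation symmetry of $g$ under relabelings of the $\mu$ indices (the $s_\mu$ being i.i.d.) suggests an ansatz with all components equal. Substituting $p^\mu=\bar p$ into \eqref{eq:self_mism} collapses the system to the scalar equation $\bar p=F(\bar p)$ with
\[F(\bar p):=\meanv{s_1\tanh(\beta\bar p S)}_{\bs,\hat\beta},\qquad S:=\sum_{\mu=1}^M s_\mu,\]
equivalently to critical points of the even function $\tilde g(\bar p):=g(\bar p\bm 1)$. Existence of a positive solution is immediate from the intermediate value theorem: $F(0)=0$, $F'(0)=\beta(1+(M-1)m_0^2)>1$ by hypothesis, and $|F|\le 1$, so $F(\bar p)-\bar p$ changes sign on $(0,\infty)$.

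Uniqueness of the positive solution is the technical crux and the step I expect to be the main obstacle. My plan is to prove that $\tilde g''$ changes sign exactly once on $(0,\infty)$. A direct differentiation gives
\[\tilde g'''(\bar p)=-2\beta^3\,\meanv{\,S^3\tanh(\beta\bar p S)/\cosh^2(\beta\bar p S)\,}_{\bs,\hat\beta}\le 0,\]
because $S^3\tanh(\beta\bar p S)\ge 0$ pointwise for $\bar p>0$, since both factors share the sign of $S$. Combined with $\tilde g''(0)=\beta M[\beta(1+(M-1)m_0^2)-1]>0$ and $\tilde g''(\bar p)\to-\beta M<0$ as $\bar p\to\infty$, the monotonicity of $\tilde g''$ forces exactly one sign change at some $\bar p_0>0$. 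Consequently $\tilde g'$ attains a unique positive maximum at $\bar p_0$, and since $\tilde g'(0)=0$ and $\tilde g'(\bar p)\to-\infty$ (as $F$ is bounded), $\tilde g'$ possesses exactly one positive zero $\bar p>\bar p_0$. This $\bar p$ is the unique positive fixed point of $F$, and by the evenness of $\tilde g$ the two symmetric points $\pm\bar p$ are simultaneously the global maximizers of $\tilde g$ on the symmetric axis.
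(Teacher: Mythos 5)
Your proof is correct, and its skeleton is the same as the paper's: both reduce the problem to the scalar fixed-point equation on the diagonal $p^\mu=\bar p$, both obtain existence from the fact that the derivative of the right-hand side at the origin equals $\beta(1+(M-1)m_0^2(\hat\beta))>1$ while the function itself is bounded, and both derive uniqueness from a one-dimensional convexity property of the restricted free energy. You differ in two technical steps. For the instability of $\bm{p}=\bm{0}$ you compute the Hessian and exhibit the positive eigenvalue $\beta[\beta(1+(M-1)m_0^2)-1]$ along $(1,\dots,1)$, whereas the paper deduces instability indirectly from the restricted maximum sitting at $\bar p>0$; your route has the small bonus of making explicit, via coercivity, that the global supremum is attained at all, which the paper leaves implicit. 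For uniqueness the paper defers to the proof of Proposition~\ref{prop:3}, i.e.\ convexity of the restricted free energy as a function of $\bar p^2$ (equivalently, that $F(\bar p)/\bar p$ is decreasing, a monotonicity asserted there without computation), while you show $\tilde g'''\le 0$ on $(0,\infty)$ by the pointwise sign argument $S^3\tanh(\beta\bar p S)\ge 0$, so that $\tilde g'$ is concave, vanishes at $0$ with positive slope, and therefore has exactly one positive zero. The two monotonicity facts are close cousins, but yours is verified explicitly, which makes your write-up more self-contained than the paper's. One minor point: to conclude that $\tilde g''$ changes sign \emph{exactly} once you should note that $\tilde g'''<0$ strictly for $\bar p>0$ (which holds since $\mathbb{P}(S\neq 0)>0$), as ``non-increasing'' alone would in principle allow a plateau; this is immediate but worth a sentence.
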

\begin{proof}
It is sufficient to study the reduced equation
\begin{equation}
p=\meanv{s_1\tanh\left(\beta p\sum_\mu s_\mu\right)}_{\bs,\hat{\beta}}:=f(p;\beta,\hat{\beta}).
\end{equation}
The function $f$ is odd and bounded in $(-1,1)$. Moreover its derivative in $p=0$ is 
\begin{equation}
\frac{\partial f}{\partial p}|_{p=0}= \beta \meanv{s_1\sum_\mu s_\mu}_{\bs,\hat{\beta}}=\beta(1+(M-1)m_0^2(\hat{\beta})).
\end{equation}
As soon as this derivative becomes larger than $1$ the function must intersect the bisector far from the origin in at least two symmetric points $\pm \bar{p}$, $\bar{p}>0$. At the same time, the unique maximum of $\hat{g}(\bm{p})$ restricted to $p^\mu=p$, $\forall \mu=1,\ldots,M$, is attained in $\bar{p}>0$, see the proof of Proposition \ref{prop:3}. This proves the uniqueness of $\bar{p}$ and the instability of $\bm{p}=0$.
\end{proof}
Note that Proposition \ref{prop:mism_inst} does not prove that the maximum of $\hat{g}(\bm{p})$ is in $\bm{p}=\pm\bar{p}\bm{1}$ because there could exist other solutions of Eqs. (\ref{eq:self_mism}) that are not homogeneous. For example as long as $\hat{\beta}<1$ (thus $m_0(\hat{\beta})=0$) and $\beta>1$, there exist solutions in which the system is aligned with a single example (pure states), i.e. $p^\mu=\bar{p}_1\neq 0$, $p^\nu=0$ $\forall \nu\neq\mu$.
In fact it is sufficient to fix $\bar{p}_1$ as the solution of 
\begin{equation}
p=\meanv{s_1\tanh(\beta p s_1)}_{\bs,0},
\end{equation}
i.e. $\bar{p}_1=\pm m_0(\beta)$. Analogously there could exist solutions in which the system is homogeneously aligned with a subset $E_k\subset\{1,\ldots,M\}$, $|E_k|=k$, of the examples (mixed states), i.e. $p^\mu=\bar{p}_k\neq 0$ $\forall \mu\in E_k $, $p^\nu=0$ $\forall \notin E_k$: in this case $\bar{p}_k$ has to be the solution of
\begin{equation}
p=\meanv{s_1\tanh(\beta p \sum_{\mu=1}^k s_k )}_{\bs,0}.
\end{equation} 
However, if the value of the inference temperature is not too low with respect to the dataset noise, then Proposition \ref{prop:2} can be generalized according to the following
\begin{proposition}\label{prop:hom}
As long as 
\begin{equation}
\beta^{-1}> 1-m^2_0(\hat{\beta})
\end{equation}
the solutions of Eqs. (\ref{eq:self_mism}) have equal components.
\end{proposition}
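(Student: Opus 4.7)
The plan is to adapt the para-ferromagnetic bound of Proposition~\ref{prop:5} to the \emph{component-wise difference} rather than to the norm. Let $\bm p$ be a solution of (\ref{eq:self_mism}) and fix two indices $\mu\neq\nu$. Subtracting the $\nu$-th self-consistency equation from the $\mu$-th gives
\[
p^\mu-p^\nu \;=\; \meanv{(s_\mu-s_\nu)\tanh(\beta\,\bs\cdot\bm p)}_{\bs,\hat\beta}.
\]
Since the integrand vanishes whenever $s_\mu=s_\nu$, the next step is to integrate out the pair $(s_\mu,s_\nu)$ conditional on the remaining components $\bs_{-\{\mu,\nu\}}$. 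The two discordant configurations each carry conditional mass $(1-m_0^2)/2$ under the i.i.d.\ biased law of $\bs$, so with $R:=\sum_{k\neq\mu,\nu}s_k p^k$ and $d:=p^\mu-p^\nu$ the identity collapses to the scalar form
\[
d \;=\; \frac{1-m_0^2}{2}\,\meanv{\tanh(\beta(R+d))-\tanh(\beta(R-d))}_R.
\]

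The second step is to control the inner difference via the Lipschitz bound $|\tanh(A+d)-\tanh(A-d)|\leq 2|d|$, which holds pointwise in $R$. Inserting this into the previous identity and taking absolute values yields
\[
|p^\mu-p^\nu| \;\leq\; \beta\,(1-m_0^2)\,|p^\mu-p^\nu|,
\]
so that $p^\mu=p^\nu$ as soon as $\beta(1-m_0^2)<1$. Iterating over all pairs $(\mu,\nu)$ then establishes the claimed homogeneity of any fixed point of (\ref{eq:self_mism}).

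The main obstacle is matching this contraction with the sharper threshold $\beta^{-1}>1-m_0(\hat\beta)$ stated in the proposition: the crude Lipschitz step naturally produces the more restrictive condition $\beta^{-1}>1-m_0^2(\hat\beta)$, which coincides with the claimed range only at $m_0\in\{0,1\}$. To recover the full regime I would refine the pointwise bound by exploiting the typical size of $R$ under the biased i.i.d.\ law of $\bs$, replacing the trivial estimate $\mathrm{sech}^2(\beta R)\leq 1$ by a sharper average in terms of $m_0$; alternatively one can pursue a free-energy route, showing strict concavity of $g(\bm p;\hat\beta,\beta)$ transverse to the homogeneous direction $\bm 1$, in the spirit of the proof of Proposition~\ref{prop:3}. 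I expect this quantitative refinement of the $\mathrm{sech}^2$ average to be the technical heart of the argument.
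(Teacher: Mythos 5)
Your core argument is correct and is essentially the paper's own proof: the paper establishes Proposition \ref{prop:hom} by noting that, exactly as in the proof of Proposition \ref{prop:p}, one obtains the contraction $|p^\mu-p^\nu|\le \beta\,(1-m_0^2(\hat{\beta}))\,|p^\mu-p^\nu|$ and concludes homogeneity whenever $\beta(1-m_0^2(\hat{\beta}))<1$. Your route to that inequality is slightly different in presentation --- you condition directly on the discordant pair $(s_\mu,s_\nu)$ to get the scalar identity $d=\tfrac{1-m_0^2}{2}\meanv{\tanh(\beta(R+d))-\tanh(\beta(R-d))}_R$ and then apply the Lipschitz bound, whereas the paper passes through the Fourier decomposition $p^\mu-p^\nu=(1-m_0^2)(A^\mu-A^\nu)$ and a monotonicity-in-the-noise lemma --- but the two identities are equivalent (check $M=2,3$ explicitly) and both yield the same contraction constant $\beta(1-m_0^2)$.

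The ``main obstacle'' you identify is not a gap in your argument but a typo in the statement: the threshold should read $\beta^{-1}>1-m_0^2(\hat{\beta})$, not $\beta^{-1}>1-m_0(\hat{\beta})$. The paper's own proof establishes only the $1-m_0^2$ version, and the surrounding text delimits the homogeneous regime by the condition $\beta(1-m_0^2(\hat{\beta}))<1$. You should not attempt the proposed refinement of the $\mathrm{sech}^2$ average: the threshold $\beta(1-m_0^2)=1$ is sharp. Indeed, the derivative at $d=0$ of the map $d\mapsto(1-m_0^2)\meanv{\tanh(\beta d+\beta Z)}_Z$ equals $\beta(1-m_0^2)\meanv{1-\tanh^2(\beta Z)}_Z$, which attains the value $\beta(1-m_0^2)$ on solutions with a single nonzero component (where $Z\equiv 0$); already for $M=2$ the difference equation $d=(1-m_0^2)\tanh(\beta d)$ admits nonzero solutions as soon as $\beta(1-m_0^2)>1$, so non-homogeneous fixed points genuinely appear below that line (this is the onset of the example-retrieval states). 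Your proof is therefore complete as written, modulo stating the corrected threshold.
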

\begin{proof}
Following the proof of Proposition \ref{prop:2} it can be proved that for $\mu\neq\nu$ it holds
\begin{equation}
|p^\mu-p^\nu|\leq \beta (1-m_0^2(\hat{\beta}))|p^\mu-p^\nu|.
\end{equation}
Therefore as long as $\beta (1-m_0^2(\hat{\beta}))<1$ the only solution is homogeneous.
\end{proof}
In the region  between the instability condition of Proposition \ref{prop:mism_inst} and the homogeneity condition of Proposition \ref{prop:hom}, i.e.
\begin{equation}
1-m_0^2(\hat{\beta})<\beta^{-1}< 1+(M-1)m_0^2(\hat{\beta}),
\end{equation}
which is non empty only if $\hat{\beta}>1$,
the global maximum of $\hat{g}(\bm{p})$ is attained in the two symmetric point $\bm{p}=\pm \bar{p}\neq 0$ and consequently the system is magnetized, i.e. it is aligned with the signal since
\begin{equation}
m=\pm\meanv{\tanh(\beta \bar{p}\sum_\mu s_\mu )}_{\bs,\hat{\beta}}\neq 0.
\end{equation}
Conversely if $\hat{\beta}<1$ the value of the system's magnetization given by Eq. (\ref{eq:m_mism}), i.e. the learning performance, is always zero because $m_0(\hat{\beta})=0$ independently from the value of $\bm{p}$. The phase diagram of the model, in terms of the value of $m$ and $\bm{p}$ is shown in Fig. \ref{fig:mismatch1}, where four different regions appear:
\begin{itemize}
    \item Paramagnetic (P) region: $\bm{p}=\bm{0}$ and $m=0$;
    \item Signal retrieval (sR) region: $\bm{p}= \bar{p}\bm{1}$, $m>0$ ;
    \item Example retrieval (eR) region: $p\neq0$, $m=0$;
    \item Mixed retrieval (mR) region: $\bm{p}\neq \bm{0}$, $m>0$.
\end{itemize}

\begin{figure}[ht]
\centering
\includegraphics[width=.5\textwidth]{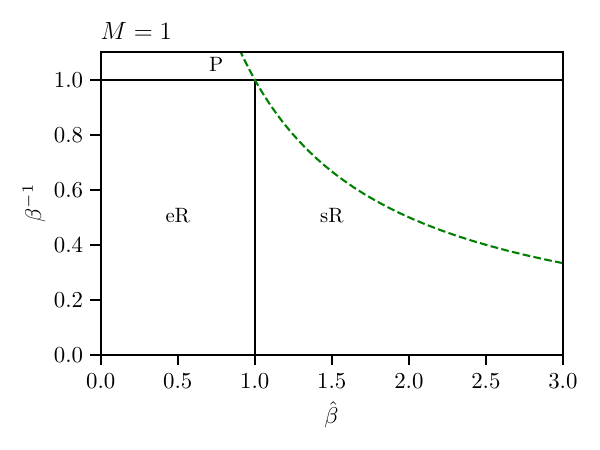}\hfil
\includegraphics[width=.5\textwidth]{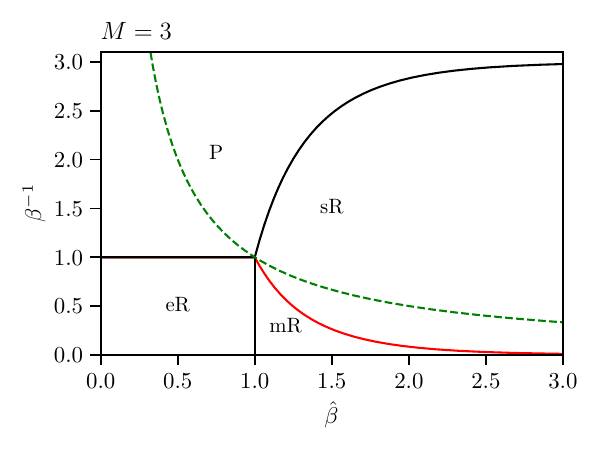}
    \caption{Phase diagram of the model in the case of \textit{mismatched} setting and finite $M$ ($M=1$ on the left, $M=3$ on the right) in terms of the dataset information $\hat{\beta}$ and the inference temperature $\beta^{-1}$. According to the values of $m$ and $\bm{p}$ solutions of Eqs. (\ref{eq:m_mism},\ref{eq:self_mism}) four different regimes appear: in the paramagnetic (P) regime $m=0$, $\bm{p}=0$; in the example retrieval (eR) regime $\bm{p}\neq \bm{0}$ but $m=0$; in the signal retrieval (sR) regime $\bm{p}=\bar{p}\bm{1}$ is homogeneous and $m>0$; in the mixed retrieval (mR) regime it is $\bm{p}\neq\bm{0}$ and $m>0$. Only in the sR and mR regimes the machine can learn the original signal and the learning performance monotonically increases with $\hat{\beta}$.
    The Nishimori line $\hat{\beta}=\beta$ is shown in green.  
    }\label{fig:mismatch1}
\end{figure}
In the paramagnetic region the inference temperature is too high and the machine neither stores the examples $\bm{p}=\bm{0}$ nor can learn the signal $m=0$. In the eR region the inference temperature is low enough to allow the storage of the examples $\bm{p}\neq\bm{0}$ but they are not enough informative to allow signal learning. The stability of the different $\bm{p}\neq 0$ solutions can be investigated exactly as in the case of the standard Hopfield model (see \cite{coolen2005theory}) but every one of them leads to a poor learning performance $m=0$. Conversely in the sR region the dataset noise is low and the stored examples informative: in this region the machine can learn the signal by example memorization. Moreover, since $\bm{p}=\bar{p}\bm{1}$ in this region, it seems that the machine is working by uniformly using all the examples in a kind of early attempt of learning by generalization. Interestingly the machine can work efficiently even at temperatures $\beta^{-1}$ higher then the dataset noise $\hat{\beta}^{-1}$. However the learning performance $m$ increases monotonically by lowering the inference temperature. Finally in the mR region, different stable solutions for $\bm{p}\neq 0$ coexists, everyone leading to a positive global magnetization $m$, where typically the globally stable one is non-homogeneous, with $\max |p_i|>\bar{p}$. It seems to suggest that in this regime the machine prefers to learn the signal mainly  leveraging on the high amount of information carried by a single (or few) examples, typical behaviour of a learning by memorization.

As in the previous section we expect that increasing the size of the dataset proportionally to the size of the system, i.e. $M=\gamma N$, it could be possible to retrieve the original pattern even when the examples are particularly noisy, i.e. their generating temperature is higher than $1$. In this regime it is possible to generalize the Conjecture \ref{conj:RS} and obtain the replica symmetric approximation of the limiting free energy in terms of the two temperatures $\beta$ and $\hat{\beta}$ as 
\begin{equation}
    -\beta f^{RS}=\operatorname{Extr}_{m,\hat{m},p,\hat{p},q,\hat{q}} \hat{f}(p,q,m,\hat{p},\hat{q},\hat{m}),
\end{equation}
where
\begin{align}
    \hat{f}(p,q,m,\hat{p},\hat{q},\hat{m})=
     & \ln 2-\frac{\gamma}{2} \ln\left((1-\hat{\beta})(1- \beta +\beta q) \right) +\frac{\gamma}{2}\frac{\beta(1-\hat{\beta})q+\hat{\beta}\beta m^2}{(1-\hat{\beta})(1- \beta +\beta q)}  \nonumber\\
     &+\frac{\hat{q}q}{2} -\hat{m}m - \frac{\hat{q}}{2}-\hat{p}p+ \frac{\beta}{2}p^2+ \meanv{ \int \mathcal{D}z \ln \cosh \left(\hat{p} s + z\sqrt{\hat{q}} + \hat{m}\right)}_s.
\end{align}
The RS saddle point equations read as 
\begin{align}
    m&=\int D\mu(z) \meanv{\tanh(\beta p s + \hat{m}+z\sqrt{\hat{q}})}_s \label{eq:mism_sp1}\\
    q &= \int D\mu(z) \meanv{\tanh^2(\beta p s +\hat{m}+z\sqrt{\hat{q}})}_s \label{eq:mism_sp2}\\
    p &= \int D\mu(z) \meanv{s\tanh(\beta p s +\hat{m}+z\sqrt{\hat{q}})}_s \label{eq:mism_sp3}
\end{align}
where $s$ is an auxiliary Rademacher random variable with symmetric distribution and
\begin{align}
    \hat{m}&= \frac{\gamma \hat{\beta}\beta m}{(1-\hat{\beta})(1-\beta+\beta q)}\nonumber\\
    \hat{q} &=  \frac{\gamma \hat{\beta}\beta^2 m^2 +\gamma \beta^2 q(1-\hat{\beta}) }{(1-\hat{\beta})(1-\beta+\beta q)^2}.
\end{align}
The order parameter $p$ has to be interpreted as the overlap between the student pattern and the examples, i.e.
\begin{equation}
    p =\lim_{N\to\infty} \meanv{Q(\bs^\mu,\bxi)}= \lim_{N\to\infty} \meanv{Q(\bs^\mu,\bxi)}^{\hat{}}.
\end{equation}
Equations (\ref{eq:mism_sp1},\ref{eq:mism_sp2},\ref{eq:mism_sp3}) reduce to those of Conjecture \ref{conj:RS} when $\hat{\beta}=\beta$ and $p=0$. In fact in that case, since $\hat{\beta}=\beta<1$, the system is never aligned with any example. Conversely if $\beta\neq\hat{\beta}$, even if $\hat{\beta}<1$, the inference temperature $\beta^{-1}$ could be in principle low enough to allow example retrieval. It is important to stress however that, since the examples are only weakly correlated with the signal,  this situation would prevent the system to be aligned with the original pattern. 
\begin{figure}
    \centering
    \includegraphics[scale=0.68]{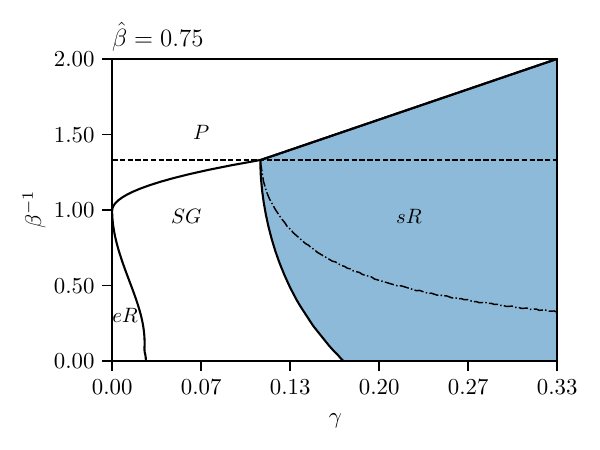}
    \includegraphics[scale=0.68]{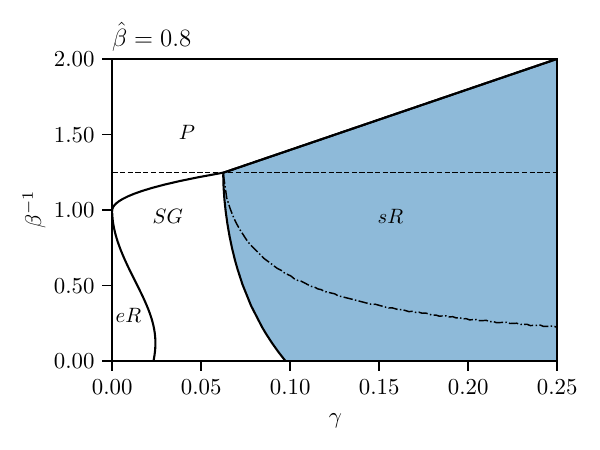}
    \includegraphics[scale=0.68]{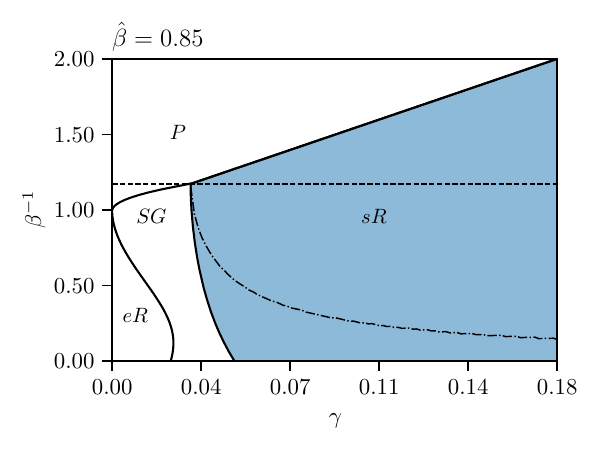}
    \includegraphics[scale=0.68]{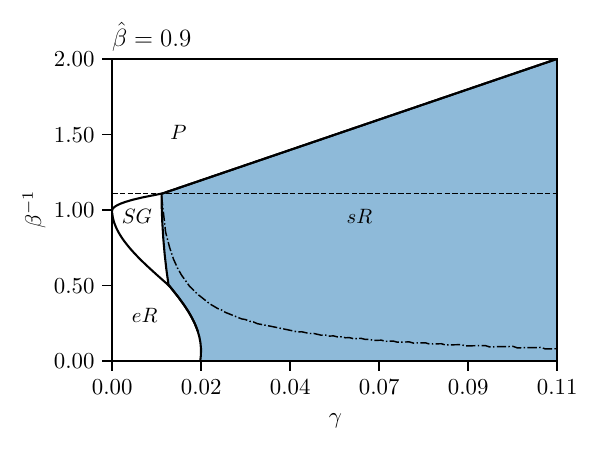}\caption{Phase diagram of the model in the mismatched setting, where $\hat{\beta} \neq \beta$, and extensive dataset $M=\gamma N$. In the paramagnetic (P) and  spin glass (SG) regions learning is impossible. For higher values of the dataset size, the machine enters a signal retrieval region (sR) where it learns by generalizations. In this region the learning performance $m$ has a maximum (dot-dash line) for a specific value of the inference temperature. In particular if $\beta^{-1}$ gets too low the machine enters the example retrieval (eR) region where it is forced to work by memorization when this approach is inefficient for learning. The dotted line is the Nishimori condition $\beta=\hat{\beta}$.
    }
    \label{fig:phasediag_mismatched_extensive}
\end{figure}
In Figure \ref{fig:phasediag_mismatched_extensive} the phase diagram of the model is shown for different values of the generating temperature $\hat{\beta}$. Four different regions appear, depending on the properties of the globally stable solution of Eqs. (\ref{eq:mism_sp1},\ref{eq:mism_sp2},\ref{eq:mism_sp3}):
\begin{itemize}
    \item Paramagnetic (P) region: $m=q=p=0$;
    \item Signal retrieval (sR) region: $m\neq 0$, $q>0$, $p=0$ ;
    \item Example retrieval (eR) region: $p\neq0$, $q>0$, $m=0$;
    \item Spin Glass (SG) region: $m=p=0$, $q>0$.
\end{itemize}
Only in the $sR$ phase student and teacher patterns are correlated and the learning performance is positive. In all other phases the student patterns is uncorrelated with the signal, being a random guess (P region), aligned with a noisy example (eR) or aligned with a spurious low energy state (SG region). 
From the high temperature (P) phase to the low temperature (SG or sR) phases a second order phase transition occurs when the system's overlap $q$ detaches from zero. The transition line can be obtained by expanding eq. ($\ref{eq:mism_sp1}$) or  eq. (\ref{eq:mism_sp2}) depending on the magnetization $m$ behavior. For small values of both $m$ and $q$ (P to sR), from eq. (\ref{eq:mism_sp1}) we get
\begin{equation}
m= \frac{\gamma \hat{\beta \beta}}{(1-\beta)(1-\hat{\beta})}m + O(mq,m^2)
\end{equation}
that gives the instability condition
\begin{equation}
\frac{\gamma \hat{\beta \beta}}{(1-\beta)(1-\hat{\beta})}=1 \ \ \implies \beta^{-1}=1+\gamma\frac{\hat{\beta}}{1-\hat{\beta}}.
\end{equation}
On the other hand if $m=0$ across the transition (P to SG), by expanding Eq. (\ref{eq:mism_sp2}) for small values of $q$ we get
\begin{equation}
q = \frac{\gamma\beta^2}{(1-\beta)^2} q +O(q^2)
\end{equation}
that gives the usual  instability condition
\begin{equation}
\frac{\gamma\beta^2}{(1-\beta)^2}=1\ \ \ \implies \beta^{-1}=1+\sqrt{\gamma}
\end{equation}
Therefore, starting from the paramagnetic region and decreasing the inference temperature a phase transition occurs as soon as one of the two instability condition is satisfied, i.e. at
\begin{equation}
\beta^{-1}(\gamma;\hat{\beta}):=\max\left\{1+\sqrt{\gamma}; 1+\gamma\frac{\hat{\beta}}{1-\hat{\beta}}   \right\}.
\end{equation}
Interestingly the two lines cross exactly at the point $(\beta,\gamma)=(\hat{\beta}, (1-\hat{\beta})^2/\hat{\beta}^2) )$, in agreement with the usual property of the Nishimori line of crossing a triple critical point. For smaller values of $\gamma$ the transition is towards a SG regime, while for higher values of $\gamma$ the transition is towards a sR region, where learning is easy. The other two transitions, from sR to SG (second order) and from SG to eR (first order) can be found  numerically and shown in Figure \ref{fig:phasediag_mismatched_extensive}.
The phase diagram  shows a non monotone behavior of the learning performance in terms of the inference temperature: if $\beta^{-1}$ is too high the learning performance is that of a random guess (P phase), if $\beta^{-1}$ is too low the learning performance can deteriorate because of the emergence of low energy configurations that are uncorrelated with the signal (they can be either correlated with the examples, eR region, or completely uncorrelated with both signal and examples, SG region). In particular the eR phase identifies a regime in which the machine is forced (through $\beta$) to work by memorization ($p>0$) in a situation where this approach is highly inefficient for learning. By increasing $\gamma$ the memory storage limit of the machine becomes beneficial for the occurrence of a region where learning is possible by generalization. Interestingly the phase diagram of Fig. \ref{fig:phasediag_mismatched_extensive} seems qualitatively similar to that of Fig. \ref{fig:mismatch1} where in both cases the x-axis measures  the amount of information contained in the dataset.

Finally note that Eqs. (\ref{eq:mism_sp1})-(\ref{eq:mism_sp3}) becomes exactly those of the classicl Hopfield model if we force $m=0$. This means that a purely SG solution always exists below the inference temperature $1+\sqrt{\gamma}$, which is only locally stable inside the sR region. In this case a Monte-Carlo simulation,  performed at a low temperature, can remain trapped in the locally stable spin-glass state. Fortunately, this occurrence can be avoided by using Simulated Annealing and lowering the temperature very slowly; this is possible because the critical temperature for signal retrieval is higher than $1+\sqrt{\gamma}$. A similar strategy that leverages on the hierarchy between temperatures of ergodicity breaking is investigated in  \cite{angelini2023limits}.


\section{Proofs}

In this section the proofs of the main results  are provided, together with some technical results, in the form of lemmas and propositions, needed for the proofs. 
Since it will be used many times in the rest of the section we recall a standard result about the Fourier decomposition of a function of boolean ($\pm1$) variables. Given $\Lambda=\{1,\ldots,M\}$, any function $f:\{-1,1\}^M\to\mathbb{R}$ can be decomposed as \begin{equation}
f(\bm{s}) = \sum_{X\subset \mathcal{P}(\Lambda)} \left\langle f, s_X\right \rangle s_X,
\end{equation}
where $s_X=\prod_{\mu\in X} s_\mu$ and $\left\langle f,g \right \rangle= 2^{-M} \sum_{\bm s}f(\bm s) g(\bm s)=\left\langle f g\right\rangle_{\bm{s},0}$.

\begin{proof}{of Theorem \ref{thm:1}}

By using gaussian linearization we can write the partition function as
\begin{equation}
\begin{split}
    Z(\bm{\mathcal{S}}) & := \sum_{\bm{\xi}} \exp\left(\frac{\beta}{2N} \sum_{\mu=1}^M \sum_{i,j=1}^N s_i^\mu s_j^\mu \xi_i\xi_j+ \lambda\sum_{i=1}^N\xi_i\right)\\
    & = \sum_{\bm{\xi}}\int D\mu(\bm{z}) \exp\left(\sqrt{\frac \beta N}\sum_{\mu=1}^M\sum_{i=1}^{N}s^\mu_i\xi_iz^\mu+ \lambda\sum_{i=1}^N\xi_i \right).
\end{split}
\end{equation}

where  $\mathbb{R}^M\ni \bm{z}\sim \mathcal{N}(0,\mathbb{I})$. By making a change of variables $p^\mu=z^\mu/\sqrt{N\beta}$ we have
\begin{align}
    Z(\bm{\mathcal{S}}) & \propto \sum_{\bm{\xi}} \int  d\bm{p}\exp\left(-\beta N \bm{p}^2/2+\sum_{i=1}^N (\beta \bm{s_i}\cdot\bm{p}+\lambda )\xi_i \right) \label{eq:magMat_sella}\\
   & = \int  d\bm{p} \ \exp \left( N\left( \log 2-\frac{\beta}{2} \bm{p}^2 + \frac{1}{N}\sum_{i=1}^N\log\cosh(\beta \bm{s}_i\cdot\bm{p} +\lambda)\right)\right) \notag\\
   &= \int  d\bm{p} \ e^{ N g_N(\bm{p},\bS)}\nonumber
\end{align}
We recall that the random vector of the examples $\bS$ is drawn from Eq. ($\ref{eq:pattern_d}$): using Lemma \ref{lemma:1} on the function $F:\{-1,1\}^M\to\mathbb{R}$, $ F(\bm{s})=\log\cosh(\beta \bm{s}\cdot\bm{p} +\lambda)$, it holds 
\begin{equation}\label{eq:limsella}
\lim_{N\to\infty}\mathbb{E}_{\beta,h\bm{\epsilon}}\ g_N(\bm{p},\bS)=g(\bm{p}):=
    \log2 -\frac{\beta\bm{p}^2}{2} +\left\langle \log\cosh(\beta  \bm{s}\cdot \bm{p} + \lambda ) \right\rangle_{\bs,\bm{\epsilon}},
\end{equation}
where $\meanv{.}_{\bs,\bm{\epsilon}}$ denotes the  mean-field expectation w.r.t. the random vector $\bs\in\{-1,1\}^M$, whose entries are independent with mean $\meanv{s^\mu}_{\bs,\bm{\epsilon}}=m_0(\beta,\epsilon^\mu h)$ and   $m_0(\beta, h)$ is the unique solution of $m_0=\tanh(\beta m_0 +h)$. For any compact set $K\subset\mathbb{R}$, using Lemma  \ref{lemma:2} on $F:K\times\{-1,1\}^M\to\mathbb{R}$, $ F(\bm{p},\bm{s})=\log\cosh(\beta \bm{s}\cdot\bm{p} +\lambda)$, it holds that
\begin{equation}\label{eq:grandinumeriM}
   g_N(\bm{p},\bS) \overset{p }{\longrightarrow}  g(\bm{p}), 
\end{equation}
uniformly in $K$, meaning that $g_N(\bm{p},\bS)$ is self-averaging uniformly in any compact.
Thanks to Lemma \ref{lem:bound-delta.gn}, $g_N$ and $g$ satisfy the conditions necessary for a generalized saddle point approximation, i.e. Proposition \ref{pr:sp1} and Proposition \ref{pr:sp2}, so that
\begin{equation}
    -\beta f =
    \lim_{N\to\infty}\frac 1 N \log \mathbb{E}_{\beta,h\bm{\epsilon}}\int_{\mathbb{R}^M}  d\bm{p} \ e^{ N g_N(\bm{p};\bS)}
    =\sup_{\mathbb{R}^M} g(\bm{p}).
\end{equation}
It is easy to show that $-\beta f$ does not depend on $\epsilon$ by using the transformations $s^\mu\to\epsilon^\mu s^\mu$ and $p^\mu\to\epsilon^\mu p^\mu$. This concludes the proof if one chooses $\epsilon = \bm{1}$.

\end{proof}

\begin{lemma}\label{lemma:1}
Let $\bs_i\in\{-1,1\}^M$ the $i$-th marginal of $\bS=\{\bs_i\}_{i=1}^N\in\{-1,1\}^{NM}$, distributed according to (\ref{eq:pattern_d}). For any function $F: \{-1,1\}^M\to\mathbb{R}$ it holds 
\begin{equation}
\lim_{N\to\infty}\mathbb{E}_{\beta,h\bm{\epsilon}}\ F(\bs_i)=\left\langle F(\bs) \right\rangle_{\bs,\bm{\epsilon}},\ \ \ \forall i=1,\ldots,N,
\end{equation}
where $\meanv{.}_{\bs,\bm{\epsilon}}$ denotes the expectation w.r.t. the random vector $\bs\in\{-1,1\}^M$ whose entries are independent with mean $\meanv{s^\mu}_{\bs,\bm{\epsilon}}=m_0(\beta,\epsilon^\mu h)$ and   $m_0(\beta, h)$ is the unique solution of $m_0=\tanh(\beta m_0 +h)$. 
\end{lemma}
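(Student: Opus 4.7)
The plan is to reduce the claim to the classical single-spin asymptotic for the Curie-Weiss model using two ingredients: the independence across $\mu$ of the examples under $P^{CW}_{\beta,h\bm{\epsilon}}$, and the Fourier decomposition over $\{-1,1\}^M$ recalled just before the proof of Theorem~\ref{thm:1}. First I would expand
\[
F(\bs_i)=\sum_{X\subset \Lambda}\langle F,s_X\rangle \prod_{\mu\in X}s^\mu_i,
\]
and observe that because $\bs^1,\ldots,\bs^M$ are drawn independently, with $\bs^\mu\sim P^{CW}_{\beta,h\epsilon^\mu}$, the monomial expectations factor across the $\mu$ index:
\[
\meanv{\prod_{\mu\in X}s^\mu_i}_{P^{CW}_{\beta,h\bm{\epsilon}}}=\prod_{\mu\in X}\meanv{s^\mu_i}_{P^{CW}_{\beta,h\epsilon^\mu}}.
\]
Since $F$ has a finite ($2^M$-term) Fourier expansion, the problem reduces to proving, for every fixed $\mu$ and $i$, the one-sample single-spin limit $\lim_{N\to\infty}\meanv{s^\mu_i}_{P^{CW}_{\beta,h\epsilon^\mu}}=m_0(\beta,h\epsilon^\mu)$; permutation symmetry of the Curie-Weiss measure immediately makes the value independent of the site $i$.

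For this single-spin asymptotic I would use a Hubbard-Stratonovich/Laplace argument. Linearising the quadratic with a Gaussian auxiliary variable and rescaling yields a representation of the form
\[
\meanv{s^\mu_i}_{P^{CW}_{\beta,h\epsilon^\mu}}=\frac{\int dm\,\tanh(\beta m+h\epsilon^\mu)\,e^{N\phi(m)}}{\int dm\,e^{N\phi(m)}},\qquad \phi(m)=\log 2\cosh(\beta m+h\epsilon^\mu)-\frac{\beta m^2}{2},
\]
up to subleading factors. Saddle-point asymptotics then concentrate both integrals on the unique global maximiser of $\phi$, which is the unique solution of the fixed-point equation $m_0=\tanh(\beta m_0+h\epsilon^\mu)$. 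Uniqueness of $m_0$, which is precisely the hypothesis built into the lemma, follows from strict concavity of $\phi$ when $\beta\le 1$, and from the tilt that breaks the double-well symmetry when $h\epsilon^\mu\neq 0$; these are exactly the regimes compatible with the statement. Evaluating the $\tanh$ at the saddle gives $m_0(\beta,h\epsilon^\mu)$.

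Combining the two steps, termwise passage to the limit in the finite Fourier sum is legal and yields
\[
\lim_{N\to\infty}\meanv{F(\bs_i)}_{P^{CW}_{\beta,h\bm{\epsilon}}}=\sum_{X\subset\Lambda}\langle F,s_X\rangle \prod_{\mu\in X}m_0(\beta,h\epsilon^\mu)=\meanv{F(\bs)}_{\bs,\bm{\epsilon}},
\]
since the right-hand side is exactly the expectation under the product Rademacher-type measure on $\{-1,1\}^M$ with marginals $\meanv{s^\mu}_{\bs,\bm{\epsilon}}=m_0(\beta,h\epsilon^\mu)$. The only step with genuine analytic content is the single-sample Laplace approximation; everything else is bookkeeping via independence and the finiteness of the Fourier basis. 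The main pitfall to guard against is the non-uniqueness of $m_0$ in the low-temperature zero-field regime, but this case is explicitly excluded by the lemma's hypothesis and is exactly the case where Theorem~\ref{thm:1} itself is not formulated.
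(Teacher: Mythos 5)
Your proposal is correct and follows essentially the same route as the paper: Fourier decomposition of $F$ over $\{-1,1\}^M$, factorization of the monomial expectations using independence across $\mu$, and reduction to the classical single-spin Curie--Weiss limit $\meanv{s^\mu_i}\to m_0(\beta,\epsilon^\mu h)$. The only difference is that you spell out the Hubbard--Stratonovich/Laplace derivation of that single-spin limit, which the paper simply cites as a known result.
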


\begin{proof}
By Fourier decomposition $F$ can be written as 
\begin{equation}
F(\bs)=\sum_{X\subset \mathcal{P}(\Lambda)} c_X \prod_{\mu\in X}s^\mu.
\end{equation}
As a consequence, for any factorized probability $\pi(\bs)=\prod_{\mu=1}^M\pi_\mu(s^\mu)$ it holds   
\begin{equation}
\meanv{F(\bs)}_\pi=\sum_{X\subset \mathcal{P}(\Lambda)} c_X \meanv{\prod_{\mu\in X}s^\mu}_{\pi}=\sum_{X\subset \mathcal{P}(\Lambda)} c_X \prod_{\mu\in X}\meanv{s^\mu}_{\pi_\mu}.
\end{equation}
Therefore, since the Fourier decomposition has a finite number of terms, it is 
\begin{eqnarray}
\lim_{N\to\infty}\mathbb{E}_{\beta,h\bm{\epsilon}}\ F(\bs_i)&=&
\sum_{X\subset \mathcal{P}(\Lambda)} c_X \prod_{\mu\in X}\lim_{N\to\infty}\mathbb{E}_{\beta,h\bm{\epsilon}}\ s^\mu_i\nonumber\\
&=& \sum_{X\subset \mathcal{P}(\Lambda)} c_X \prod_{\mu\in X}\meanv{s^\mu}_{\bs,\bm{\epsilon}}=
\left\langle F(\bs) \right\rangle_{\bs,\bm{\epsilon}},
\end{eqnarray}
where in the second line we have used that in the Curie-Weiss model at $\beta>0$ and $h>0$
\begin{equation}
\lim_{N\to\infty} \mathbb{E}_{\beta,h\bm{\epsilon}}\ s^\mu_i= m_0(\beta,\epsilon^\mu h)=\meanv{s^\mu}_{\bs,\bm{\epsilon}}\ \ \forall i=1,\ldots,N.
\end{equation}

\end{proof}

\begin{lemma}\label{lemma:2}
Let $\bs_i\in\{-1,1\}^M$ the $i$-th marginal of $\bS=\{\bs_i\}_{i=1}^N\in\{-1,1\}^{NM}$, distributed according to (\ref{eq:pattern_d}). Given a compact set $K\subset\mathbb{R}^M$ and a bounded function $F:K\times\{-1,1\}^M\to \mathbb{R}$, it holds, uniformly in $K$, that 
\begin{equation}
   \frac 1 N \sum_{i=1}^N F(\bm{p},\bs_i)\overset{p}{\longrightarrow}\left\langle F(\bm{p},\bs) \right\rangle_{\bs,\bm{\epsilon}}.\nonumber
\end{equation}
\end{lemma}

\begin{proof}
Given the set 
\begin{equation}
A_{N,\epsilon}=\left\{\bS: \sup_{\bm{p}\in K} \left| \frac 1 N\sum_{i=1}^N F(\bm{p},\bs_i) - \left\langle F(\bm{p},\bs) \right\rangle_{\bs,\bm{\epsilon}}\right| >\epsilon\right\},
\end{equation}
we need to show that, $\forall \epsilon>0$, $\lim_{N\to\infty}\mathbb{P}(A_{N,\epsilon})= 0$. To this aim let's consider the Fourier decomposition of $F(\bm{p},\bs)$ as
\begin{equation}
F(\bm{p},\bs)=\sum_{X\in\mathcal{P}(\Lambda)} c_X(\bm{p}) \prod_{\mu\in X} s^\mu
\end{equation}
and define the set 
\begin{equation}
B_{N,\epsilon}=\left\{\bS: \sup_{X\in\mathcal{P}(\Lambda)} \left|\frac1 N\sum_{i=1}^N \prod_{\mu\in X} s^\mu_i -\meanv{\prod_{\mu\in X} s^\mu}_{\bs,\bm{\epsilon}}  \right|>\frac{\epsilon}{L_F} \right \},
\end{equation}
where $L_F=\sum_{X\in\mathcal{P}(\Lambda)}\sup_{\bm{p}\in K} |c_X(\bm{p})|<\infty$.
According to this definition note that within the set $B_{N,\epsilon}^c$ it holds
\begin{eqnarray}
\sup_{p\in K} \left| \frac 1 N\sum_{i=1}^N F(\bm{p},\bs_i) - \left\langle F(\bm{p},\bs) \right\rangle_{\bs,\bm{\epsilon}}\right|&\leq& 
\sup_{p\in K} \sum_{X\in\mathcal{P}(\Lambda)} |c_X(\bm{p})|  \left|\frac 1 N \sum_{i=1}^N \prod_{\mu\in X} s^\mu_i -\meanv{\prod_{\mu\in X} s^\mu}_{\bs,\bm{\epsilon}}  \right|\leq \epsilon. \nonumber 
\end{eqnarray}
Therefore  $B^c_{N\epsilon}\subseteq A^c_{N\epsilon}$ and thus $A_{N,\epsilon}\subseteq B_{N,\epsilon}$. As a consequence 
\begin{eqnarray}
\mathbb{P}(A_{N,\epsilon})&\leq& \mathbb{P}(B_{N,\epsilon})
= \mathbb{P}\left(\bigcup_{X\in\mathcal{P}(\Lambda)}   \left\{\bS:\left|\frac1 N\sum_{i=1}^N \prod_{\mu\in X} s^\mu_i -\meanv{\prod_{\mu\in X} s^\mu}_{\bs,\bm{\epsilon}}  \right|>\frac{\epsilon}{L_F} \right \}\right)
\nonumber \\
&\leq& \sum_{X\in\mathcal{P}(\Lambda)}\mathbb{P}\left(\left\{\bS:\left|\frac1 N\sum_{i=1}^N \prod_{\mu\in X} s^\mu_i -\meanv{\prod_{\mu\in X} s^\mu}_{\bs,\bm{\epsilon}}  \right|>\frac{\epsilon}{L_F} \right \}\right)\nonumber\\
&\leq& \sum_{X\in\mathcal{P}(\Lambda)}(L_F/\epsilon)^2 \ \mathbb{E}_{\beta,h\bm{\epsilon}}\left(\frac1 N\sum_{i=1}^N \prod_{\mu\in X} s^\mu_i -\meanv{\prod_{\mu\in X} s^\mu}_{\bs,\bm{\epsilon}} \right)^2 
\end{eqnarray}
that goes to zero in the thermodynamic limit since
\begin{equation}
   \mathbb{E}_{\beta,h\bm{\epsilon}}\  \left(\frac1 N\sum_{i=1}^N \prod_{\mu\in X} s^\mu_i \right)^2= \frac{1}{N^2}\sum_{i\neq j}\prod_{\mu\in X} \mathbb{E}_{\beta,h\bm{\epsilon}} \left(s^\mu_is^\mu_j\right) + \frac 1 N
\end{equation}
and using the factorization property of the Curie-Weiss model that 
\begin{equation}
    \lim_{N\to\infty} \mathbb{E}_{\beta,h\bm{\epsilon}} \left(s^\mu_is^\mu_j\right) = m^2_0(\beta,\epsilon^\mu h)=\meanv{s^\mu}^2_{\bs,\bm{\epsilon}}\ \ \forall i,j=1,\ldots,N, \ i\neq j.
\end{equation}
\end{proof}

\begin{lemma}
\label{lem:bound-delta.gn}The functions $g_N:\mathbb{R}^M\times\{-1,1\}^{NM}\to\mathbb{R}$ and $g:\mathbb{R}^M\to\mathbb{R}$ defined in eqs. (\ref{eq:magMat_sella}) and (\ref{eq:limsella})
satisfy
\begin{equation}
    \left|g_{N}(\bm{p},\bS)-g(\bm{p})\right|\le2\beta\sum_{\mu=1}^{M}\left|p_{\mu}\right|, \:\quad\forall \bm{p}\in\mathbb{R}^{M},\bS\in\{-1,+1\}^{NM}.
\end{equation}
As a consequence it holds
\begin{itemize}
    \item $\sup_N\sup_{K\times\{-1,1\}^{NM}} |g_N|<\infty$ for any compact $K\subset \mathbb{R}^M$;
    \item $\exists\  C_1<\infty$: $\sup_{\mathbb{R}^M\times\Sigma_N}g_N<C_1$ ;
    \item $\exists\  C_2<\infty$: $\int_{\mathbb{R}^M}e^{g_N(\bm{p},s)}d\bm{p}<C_2$;
    \item $\exists\  K\subset\mathbb{R}^M, \delta>0$: $ g_N(\bm{p},s)-\sup_K g <-\delta, \ \ \ \forall (\bm{p},s)\in K^c\times\Sigma_N$.
\end{itemize}
\end{lemma}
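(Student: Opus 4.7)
The plan is to exploit the fact that $\log\cosh$ is $1$-Lipschitz (since $|\tanh|\le 1$) combined with the concave quadratic $-\frac{\beta}{2}\bm{p}^2$ that dominates the large-$\bm{p}$ behaviour of $g_N$. For the core estimate, I would anchor $\log\cosh(\beta\bs\cdot\bm{p}+\lambda)$ at the point $\lambda$: for every $\bs\in\{-1,+1\}^M$,
\[
\bigl|\log\cosh(\beta\bs\cdot\bm{p}+\lambda)-\log\cosh(\lambda)\bigr|\le\beta|\bs\cdot\bm{p}|\le\beta\sum_{\mu=1}^{M}|p_\mu|.
\]
Averaging this inequality over the $N$ sites yields the analogous bound with $\frac{1}{N}\sum_{i=1}^N\log\cosh(\beta\bs_i\cdot\bm{p}+\lambda)-\log\cosh(\lambda)$ in place of the left-hand side, and taking expectation with respect to $\bs$ yields it with $\meanv{\log\cosh(\beta\bs\cdot\bm{p}+\lambda)}_{\bs,\bm{\epsilon}}-\log\cosh(\lambda)$. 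Since $g_N-g$ is exactly the difference of these two quantities, the triangle inequality delivers $|g_N(\bm{p},\bS)-g(\bm{p})|\le 2\beta\sum_\mu|p_\mu|$ pointwise in $\bS$.

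For the four listed consequences, I would first record the sharper uniform upper bound obtained from $\log\cosh(x)\le|x|$:
\[
g_N(\bm{p},\bS)\le\log 2+|\lambda|+\beta\sum_{\mu=1}^{M}|p_\mu|-\frac{\beta}{2}\sum_{\mu=1}^{M}p_\mu^2,
\]
valid for all $\bm{p}\in\mathbb{R}^M$ and all $\bS$. Maximising coordinate by coordinate (the one-dimensional function $\beta|x|-\frac{\beta}{2}x^2$ peaks at $|x|=1$ with value $\beta/2$) produces the finite constant $C_1=\log 2+|\lambda|+M\beta/2$, giving the second bullet. Factorising the exponential across coordinates reduces $\int_{\mathbb{R}^M}e^{g_N}\,d\bm{p}$ to a product of one-dimensional Gaussian-type integrals $\int_\mathbb{R}e^{\beta|x|-\beta x^2/2}\,dx$, each finite, giving $C_2$. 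Boundedness on compacta follows directly from the main estimate combined with continuity of $g$, covering the first bullet. For the last bullet, the quadratic decay shows $g_N(\bm{p},\bS)\to-\infty$ as $\|\bm{p}\|\to\infty$ uniformly in $\bS$, so a closed ball $K$ of sufficiently large radius around the origin guarantees that outside $K$ the upper bound on $g_N$ falls strictly below $\sup_K g\ge g(0)=\log 2+\log\cosh\lambda$ by some fixed $\delta>0$.

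The proof is essentially mechanical; the only subtle point is that the main Lipschitz-based estimate must be deterministic in $\bS$, not merely in probability or in $L^1(P^{CW}_{\beta,h\bm{\epsilon}})$, so it can be fed as an envelope into the generalised saddle-point Propositions used in the proof of Theorem \ref{thm:1}. This is precisely why the correct route is to anchor $\log\cosh$ at $\lambda$ and invoke the triangle inequality, rather than to pass through the law-of-large-numbers statement of Lemma \ref{lemma:2}.
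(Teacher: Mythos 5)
Your proof is correct and takes essentially the same approach as the paper: the key estimate is the $1$-Lipschitz property of $\log\cosh$ (the paper writes the difference as $\meanv{\int\tanh}$ and bounds $|\tanh|\le 1$ directly, whereas you anchor at $\log\cosh(\lambda)$ and use the triangle inequality, but this yields the identical bound $2\beta\sum_\mu|p_\mu|$), and the four consequences follow in both cases from a deterministic envelope of the form $\log 2-\frac{\beta}{2}\bm{p}^2+O(\sum_\mu|p_\mu|)$ with Gaussian tails. Your closing remark about why the bound must hold pointwise in $\bS$ rather than merely in probability correctly identifies the role this lemma plays in feeding Propositions \ref{pr:sp1} and \ref{pr:sp2}.
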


\begin{proof}
It holds, 
\begin{eqnarray*}
\left|g_{N}(\bm{p},\bS)-g(\bm{p})\right| & = & \left|\frac{1}{N}\sum_{i=1}^{N}\log\cosh\left(\beta\bs_{i}\cdot\bm{p}+\lambda\right)-\meanv{\log\cosh\left(\beta \bs\cdot \bm{p}+\lambda\right)}_{\bs,\epsilon}\right|\\
 & = & \left|\frac{1}{N}\sum_{i=1}^{N}\meanv{\log\frac{\cosh\left(\beta \bs_{i}\cdot \bm{p}+\lambda\right)}{\cosh\left(\beta\bm{\tau}\cdot \bm{p}+\lambda\right)}}_{\bm{\tau},\epsilon}\right|
  =  \left|\frac{1}{N}\sum_{i=1}^{N}\meanv{\int_{\beta\bm{\tau}\cdot \bm{p}+\lambda}^{\beta \bs_{i}\cdot \bm{p}+\lambda}\tanh(x)dx}_{\bm{\tau},\epsilon}\right|\\
  &\leq& \frac 1 N \sum_{i=1}^N \left|\meanv{\beta \bs_{i}\cdot \bm{p} -\beta\bm{\tau}\cdot \bm{p} }_{\bm{\tau},\bm{\epsilon}}\right|
\leq 2\beta \sum_{\mu=1}^M |p_\mu|.
\end{eqnarray*}
As a consequence of the previous relation $g_N$ is bounded uniformly by 
\begin{equation}
g_N(\bm{p},\bS)\leq g(\bm{p})+\left|g_N(\bm{p},\bS)-g(\bm{p})\right|
\leq
\log 2-\frac{\beta\bm{p}^2}{2} +\meanv{\log\cosh(\beta \bs\cdot\bm{p}+\lambda)}_{\bs,\bm{\epsilon}}+2\beta\sum_{\mu=1}^M|p_\mu|:= \hat{g}(\bm{p}).
\end{equation}
Since $\hat{g}$ is continuous, it is bounded in any compact set $K\subset\mathbb{R}^M$.  Moreover it is easy to see that $\sup_{\mathbb{R}^M}\hat{g}<\infty$ and that $\int_{\mathbb{R}^M}e^{g(\bm{p})}d\bm{p}<\infty$ since it has gaussian tails. Moreover since $\hat{g}$ goes to $-\infty$ at infinity it always exists a sufficiently large ball $B_{r_\delta}$ of radius $r_\delta$ such that $\hat{g}(\bm{p})-\sup \hat{g}<-\delta$ for $\bm{p}\in B_{r_\delta}^c$. Therefore the properties for $g_N$ are proved uniformly in $N$.
\end{proof}

\begin{proposition}\label{pr:sp1}
Let  $K\subset\mathbb{R}^M$ a compact set, $\mu_N$ a probability distribution over a finite set $\Sigma_N$, $F_N:K\times\Sigma_N\to \mathbb{R}$ a  sequence of bounded functions such that $\sup_N \sup_{K\times\Sigma_N}|F_N|<\infty$ and $F:K\to\mathbb{R}$ bounded  such that
\begin{itemize}
    \item $F_N \overset{p}{\longrightarrow} F$ uniformly in $K$.
    \item $\lim_{N\to\infty} \frac 1 N \log \int_K e^{NF} = \sup_K F$;
\end{itemize}
Then it holds
\begin{equation}
\lim_{N\to\infty} \frac 1 N \mathbb{E}_{\mu_N} \log \int_K d\bm{p} e^{NF_N(\bm{p},s)}=\sup_K F.
\end{equation}
\end{proposition}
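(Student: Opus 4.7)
The plan is to run a standard Laplace-type argument: split $\Sigma_N$ into a ``good event'' on which $F_N(\cdot,s)$ is uniformly close in sup norm to $F$, and control the complement through the uniform boundedness of $F_N$. Concretely, for fixed $\epsilon>0$ I would introduce
$$G_{N,\epsilon}:=\Bigl\{s\in\Sigma_N\ :\ \sup_{\bm{p}\in K}|F_N(\bm{p},s)-F(\bm{p})|\leq\epsilon\Bigr\},$$
so that the uniform-in-$K$ convergence in $\mu_N$-probability gives $\mu_N(G_{N,\epsilon})\to 1$. Writing $X_N(s):=\frac{1}{N}\log\int_K e^{NF_N(\bm{p},s)}d\bm{p}$ and $A_N:=\frac{1}{N}\log\int_K e^{NF(\bm{p})}d\bm{p}$, the pointwise sandwich $F(\bm{p})-\epsilon\leq F_N(\bm{p},s)\leq F(\bm{p})+\epsilon$ valid for $s\in G_{N,\epsilon}$ lifts through $e^{N\cdot}$ and integration over $K$ to
$$A_N-\epsilon\leq X_N(s)\leq A_N+\epsilon,\qquad\forall s\in G_{N,\epsilon}.$$
The second hypothesis gives $A_N\to\sup_K F$, so on the good event $X_N$ is already trapped within $\epsilon$ of the target value.

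On the complement $G_{N,\epsilon}^c$ I would use the compactness of $K$ (hence finite Lebesgue measure $|K|<\infty$) together with the uniform bound $C:=\sup_N\sup_{K\times\Sigma_N}|F_N|<\infty$ supplied in the hypotheses: the trivial estimate $|K|\,e^{-NC}\leq\int_K e^{NF_N}\leq |K|\,e^{NC}$ yields $|X_N(s)|\leq C'$ for a constant $C'$ independent of $N$ and $s$ (for $N$ large enough). The conclusion then follows by decomposing
$$\mathbb{E}_{\mu_N}[X_N]=\mathbb{E}_{\mu_N}\bigl[X_N\mathbf{1}_{G_{N,\epsilon}}\bigr]+\mathbb{E}_{\mu_N}\bigl[X_N\mathbf{1}_{G_{N,\epsilon}^c}\bigr],$$
observing that the second summand is bounded in absolute value by $C'\mu_N(G_{N,\epsilon}^c)\to 0$, while the first lies in $[(A_N-\epsilon)\mu_N(G_{N,\epsilon}),(A_N+\epsilon)\mu_N(G_{N,\epsilon})]$ and therefore converges into $[\sup_K F-\epsilon,\sup_K F+\epsilon]$. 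Sending $\epsilon\to 0^+$ closes the argument.

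The only mildly delicate point, and the main obstacle, is ensuring that the exceptional set on which uniform convergence may fail does not pollute the free-energy limit; this is precisely the role of the uniform boundedness assumption on $F_N$. Once that ingredient is in place the argument reduces to a standard ``convergence in probability plus uniform integrability'' scheme transposed from sample averages to the logarithm of the integral, and the rest is routine bookkeeping.
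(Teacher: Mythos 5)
Your proposal is correct and follows essentially the same route as the paper's proof: both split $\Sigma_N$ according to whether $F_N(\cdot,s)$ is uniformly within $\epsilon$ of $F$ on $K$, kill the exceptional set via the uniform bound on $F_N$ together with its vanishing probability, replace $F_N$ by $F$ at the exponent on the good set at cost $\epsilon$, and invoke the Laplace limit for $F$ before sending $\epsilon\to 0^+$. The only difference is cosmetic (you work with the good event where the paper works with its complement).
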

\begin{proof}
Let's define $C=\sup_N \sup_{K\times\Sigma_N}|F_N|<\infty$ and the set
\begin{equation}
A_{N,\epsilon}=\left\{s\in\Sigma_N: \sup_{p\in K} \left|  F_N(\bm{p},s) - F(\bm{p}\right| >\epsilon\right\}.
\end{equation} By assumptions it holds, $\forall \epsilon>0$, that $\mu_N(A_{N,\epsilon})\to 0$. Therefore  
it holds, $\forall\epsilon>0 $, that
\begin{eqnarray}
\lim_{N\to\infty}\frac 1 N \mathbb{E}_{\mu_N} \mathbb{I}_{A_{N,\epsilon}}\log \int_K d\bm{p} e^{NF_N(\bm{p},s)}=0, \nonumber
\end{eqnarray}
where $\mathbb{I_A}$ is the indicator function of the set $A$, since, calling $|K|$ the Lebesgue measure of $K$, \begin{equation}
\left|\frac 1 N \mathbb{E}_{\mu_N} \mathbb{I}_{A_{N,\epsilon}}\log \int_K d\bm{p} e^{NF_N(\bm{p},s)}\right|\leq \mu_N(A_{N,\epsilon})\left|C+\log(|K|)/N\right| .
\end{equation}
Thanks to these results we just need to evaluate the expected value over $A^c_{N,\epsilon}$ that has full probability in the limit. Inside $A^c_{N,\epsilon}$ we can substitute $F_N$ with $F$ at the exponent with an error $\epsilon$ i.e. 
\begin{equation}
\left|\frac 1 N \mathbb{E}_{\mu_N} \mathbb{I}_{A^c_{N,\epsilon}}\log \int_K d\bm{p} e^{NF_N(\bm{p},s)}- \frac 1 N \mathbb{E}_{\mu_N} \mathbb{I}_{A^c_{N,\epsilon}}\log \int_K d\bm{p} e^{NF(\bm{p})}\right|<\epsilon.
\end{equation}
Since $\mathbb{E}_{\mu_N}\mathbb{I}_{A^c_{N,\epsilon}}=\mu_N(A^c_{N,\epsilon})\to 1$ and using  the standard Laplace approximation, it holds
\begin{equation}
\lim_{N\to\infty } \frac 1 N \mathbb{E}_{\mu_N} \mathbb{I}_{A^c_{N,\epsilon}}\log \int_K d\bm{p} e^{NF(\bm{p})} =\sup_K F,
\end{equation}
therefore, $\forall \epsilon >0$,
\begin{equation}
\left|\lim_{N\to\infty}\frac 1 N \mathbb{E}_{\mu_N} \log \int_K d\bm{p} e^{NF_N(\bm{p},s)}-\sup_K F\right|=\left|\lim_{N\to\infty}\frac 1 N \mathbb{E}_{\mu_N} \mathbb{I}_{A^c_{N,\epsilon}}\log \int_K d\bm{p} e^{NF_N(\bm{p},s)}-\sup_K F\right|<\epsilon.
\end{equation}

\end{proof}

\begin{proposition}\label{pr:sp2}
Let $K\subset\mathbb{R}^M$ a compact set, $\mu_N$ a probability distribution over a finite set $\Sigma_N$, $F_N:\mathbb{R}^M\times\Sigma_N\to \mathbb{R}$ and  $F:\mathbb{R}^M\to\mathbb{R}$ satisfying the conditions of Proposition \ref{pr:sp1} when restricted on the set $K$.
Assuming moreover that
\begin{itemize}
    \item $\exists\  C_1<\infty$: $\sup_{\mathbb{R}^M\times\Sigma_N}F_N<C_1$ ;
    \item $\exists\  C_2<\infty$: $\int_{\mathbb{R}^M}e^{F_N(\bm{p},s)}d\bm{p}<C_2$;
    \item $\exists\  \delta>0$: $ F_N(\bm{p},s)-\sup_K F <-\delta, \ \ \ \forall (\bm{p},s)\in K^c\times\Sigma_N$;
\end{itemize}
then it holds
\begin{equation}
\lim_{N\to\infty} \frac 1 N \mathbb{E}_{\mu_N} \log \int_{\mathbb{R}^M} d\bm{p} e^{NF_N(\bm{p},s)}=\sup_K F.
\end{equation}
\end{proposition}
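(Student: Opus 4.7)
The plan is to reduce the problem to Proposition \ref{pr:sp1} by cutting the integral at the compact set $K$ and showing that the tail contribution from $K^{c}$ is exponentially smaller than the contribution from $K$. Split
\begin{equation*}
\int_{\mathbb{R}^M} d\bm{p}\, e^{N F_N(\bm{p},s)} \;=\; \int_K d\bm{p}\, e^{N F_N(\bm{p},s)} + \int_{K^c} d\bm{p}\, e^{N F_N(\bm{p},s)}.
\end{equation*}
By Proposition \ref{pr:sp1}, the first term, after applying $\frac{1}{N}\mathbb{E}_{\mu_N}\log$, converges to $\sup_K F$; the expectation is to show that the tail contributes at most $\sup_K F-\delta$ per spin, hence vanishes in the limit.

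The lower bound is immediate: monotonicity of the logarithm and $\int_{\mathbb{R}^M}\ge \int_{K}$, combined with Proposition \ref{pr:sp1}, give $\liminf_{N\to\infty}\frac{1}{N}\mathbb{E}_{\mu_N}\log\int_{\mathbb{R}^M} e^{NF_N}\ge \sup_K F$. For the upper bound, the key observation is a \emph{deterministic} tail estimate. Writing $e^{NF_N}=e^{(N-1)F_N}\,e^{F_N}$ on $K^c$ and using the third hypothesis $F_N(\bm{p},s)< \sup_K F - \delta$ on $K^c\times\Sigma_N$ yields
\begin{equation*}
\int_{K^c} d\bm{p}\, e^{NF_N(\bm{p},s)} \;\le\; e^{(N-1)(\sup_K F - \delta)}\int_{K^c} d\bm{p}\, e^{F_N(\bm{p},s)} \;\le\; C_2\, e^{(N-1)(\sup_K F - \delta)},
\end{equation*}
where the second hypothesis is used in the last step. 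Using $\log(a+b)\le \log 2 + \max(\log a,\log b)$ and splitting on the good event $A^c_{N,\epsilon}$ from the proof of Proposition \ref{pr:sp1}, one has $F_N\le F+\epsilon\le \sup_K F+\epsilon$ on $K$; choosing $0<\epsilon<\delta$, the $K$-integral dominates the $K^c$-integral, so
\begin{equation*}
\frac{1}{N}\mathbb{E}_{\mu_N}\!\left[\mathbb{I}_{A^c_{N,\epsilon}}\log\int_{\mathbb{R}^M} e^{NF_N}\right]\;\le\;\sup_K F+\epsilon+o(1).
\end{equation*}
On the bad event $A_{N,\epsilon}$ the crude deterministic bound $\int_{\mathbb{R}^M} e^{NF_N}\le C_2\,e^{(N-1)C_1}$ (which follows at once from the first two hypotheses) gives a contribution of order $C_1\,\mu_N(A_{N,\epsilon})=o(1)$. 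Sending $\epsilon\to 0$ yields $\limsup\le \sup_K F$, closing the argument.

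The main technical obstacle is the interplay between $\log$ and the expectation $\mathbb{E}_{\mu_N}$: one cannot interchange them, and the probabilistic uniform approximation $F_N\approx F$ on $K$ is only valid on $A^c_{N,\epsilon}$. The point of the three extra hypotheses is precisely to supply a \emph{uniform in $s\in\Sigma_N$} deterministic control of $e^{NF_N}$ both globally (via $C_1,C_2$) and on the tail $K^c$ (via $\delta$), so that on $A_{N,\epsilon}$ one still has an integrable upper bound whose expectation vanishes as $\mu_N(A_{N,\epsilon})\to 0$, and on $A^c_{N,\epsilon}$ one can safely pass to the restricted Laplace asymptotics of Proposition \ref{pr:sp1}.
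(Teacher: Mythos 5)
Your proof is correct and follows essentially the same route as the paper's: split the integral at $K$, apply Proposition \ref{pr:sp1} to the $K$-part, and use the three deterministic hypotheses to show the $K^c$-tail is exponentially negligible, splitting on the good event $A^c_{N,\epsilon}$ to pass from $F_N$ to $F$. The only cosmetic difference is that the paper bounds $\frac1N\mathbb{E}\log\bigl(1+\int_{K^c}/\int_K\bigr)$ directly via $\log(1+x)\le x$, whereas you run a $\liminf$/$\limsup$ sandwich with $\log(a+b)\le\log 2+\max(\log a,\log b)$; both hinge on the identical tail estimate $\int_{K^c}e^{NF_N}\le C_2\,e^{(N-1)(\sup_K F-\delta)}$.
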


\begin{proof}
Since
\begin{equation}
\frac 1 N \mathbb{E}_{\mu_N} \log \int_{\mathbb{R}^M} d\bm{p} e^{NF_N(\bm{p},s)}=
\frac 1 N \mathbb{E}_{\mu_N} \log \int_{K} d\bm{p} e^{NF_N(\bm{p},s)}+
\frac 1 N \mathbb{E}_{\mu_N} \log\left(1+\frac{ \int_{K^c} d\bm{p} e^{NF_N(\bm{p},s)}}{ \int_{K} d\bm{p} e^{NF_N(\bm{p},s)} }\right)
\end{equation}
it is sufficient to apply Proposition \ref{pr:sp1} for the first term and  showing that the second term is vanishing in the limit. Defining the set $A_{N,\epsilon}$ as in the proof of Proposition \ref{pr:sp1}, with $\epsilon<\delta$ and denoting $F^*=\sup_K F$ and $C_0=\sup_N\sup_{K\times\Sigma_N}|F_N|$, it holds
\begin{equation}
\frac 1 N \mathbb{E}_{\mu_N}\mathbb{I}_{A_{N,\epsilon}} \log\left(1+\frac{ \int_{K^c} d\bm{p} e^{NF_N(\bm{p},s)}}{ \int_{K} d\bm{p} e^{NF_N(\bm{p},s)} }\right)
\leq \mu_N(A_{N,\epsilon})\frac 1 N \log \left(1+ \frac{e^{(N-1)C_1}C_2}{|K|e^{-NC_0}}\right)\overset{N\to\infty}{\longrightarrow} 0,
\end{equation}
since $\mu_N(A_{N,\epsilon})\to 0$.
Moreover, choosing $\epsilon_2<\delta-\epsilon$, it holds $\forall N>N_{\epsilon_2}$
\begin{eqnarray}
&&\frac 1 N \mathbb{E}_{\mu_N}\mathbb{I}_{A^c_{N,\epsilon}} \log\left(1+\frac{ \int_{K^c} d\bm{p} e^{NF_N(\bm{p},s)}}{ \int_{K} d\bm{p} e^{NF_N(\bm{p},s)} }\right)
\leq 
\frac 1 N \mathbb{E}_{\mu_N}\mathbb{I}_{A^c_{N,\epsilon}} \frac{ \int_{K^c} d\bm{p} e^{NF_N(\bm{p},s)}}{ \int_{K} d\bm{p} e^{NF_N(\bm{p},s)} }\nonumber \\
&\leq&
\frac 1 N \mathbb{E}_{\mu_N}\mathbb{I}_{A^c_{N,\epsilon}} \frac{ e^{(N-1)F^*}\int_{K^c} d\bm{p} e^{(N-1) (F_N(\bm{p},s)-F^*)+F_N(\bm{p},s)}}{ \int_{K} d\bm{p} e^{-N|F_N(\bm{p},s)-F(\bm{p})|+NF(\bm{p})} }
\leq 
\frac 1 N  \frac{ e^{(N-1)F^*-\delta(N-1)}C_2}{ e^{-\epsilon N}\int_{K} d\bm{p} e^{NF(\bm{p})} }
\nonumber \\
&\leq&
\frac 1 N  \frac{ e^{(N-1)F^*-\delta(N-1)}C_2}{ e^{-\epsilon N} e^{NF^*-N\epsilon_2} }=C e^{-N(\delta-\epsilon-\epsilon_1)}\overset{N\to\infty}{\longrightarrow} 0,
\nonumber
\end{eqnarray}
where we have used that 
\begin{equation}
\left|\frac 1 N \log \int_{K} d\bm{p} e^{NF(\bm{p})}-F^* \right|<\epsilon_2.
\end{equation}
\end{proof}

\begin{proof}{of Proposition \ref{prop:1}}\\
\noindent To find the value of the magnetization (third point) we use that, $\forall N\in\mathbb{N}$,
\begin{equation}
  \meanv{\frac{\bm{1}\cdot\bxi}{N}}=-\beta\partial_\lambda\  f_N .
\end{equation}
Thanks to the concavity of $f_N$ in $\lambda$, we can exchange the thermodynamic limit with the derivative  obtaining 
\begin{equation}
    m=
    \partial_\lambda (-\beta f)= \left\langle  \tanh(\beta \bm{s} \cdot \bm{p} +\lambda)\right\rangle_{\bm{s}}.\nonumber 
\end{equation}
Moreover
\begin{equation}
 \operatorname{Var}(\frac{\bm{1}\cdot\bxi}{N})=-\beta N^{-1} \ \partial^2_\lambda f_N
\end{equation} 
and therefore the magnetization has to be self-averaging in the thermodynamic limit. 
Analogous arguments, based on the response of the free energy to linear external perturbations \cite{mezard2009information} can be used for the first two points that are just generalizations of classical results about the Curie Weiss model \cite{ellis2006entropy}.

\end{proof}
\begin{proof}{of Proposition \ref{prop:p}}\\
\noindent  Applying the Fourier decomposition to the function $f(\bs;\bm p)=\bs\tanh(\beta \bs\cdot\bm{p})$ it holds
\begin{eqnarray}\label{eq:fourier}
p^\mu &=& \meanv{f^\mu(\bs;\bm p)}_{\bs} = A^\mu(\bm p) + \sum_{\nu\neq \mu} A^{\nu}(\bm p) \left\langle s_\mu s_\nu \right\rangle_{\bm s} = A^\mu(\bm p) + m_0^2(\beta) \sum_{\nu\neq \mu} A^{\nu}(\bm p) 
\end{eqnarray}
where
\begin{equation}
A^\mu(\bm p) = \langle\tanh(\beta p^\mu + \beta\sum_{\nu\neq \mu} p^\nu s_\nu) \rangle_{\bm{s},0}
\end{equation}
In fact it is easy to check by symmetry that
\begin{eqnarray}
\meanv{s_\mu\tanh(\beta(\bm{s}\cdot \bm{p})),1}&=& A^\mu(\bm p) \nonumber\\
\meanv{s_\mu\tanh(\beta(\bm{s}\cdot \bm{p})),s_\nu}&=& \meanv{\tanh(\beta (p^\nu s_\mu + p^\mu s_\nu +\sum_{k\neq \mu,\nu} p^k s_k s_\mu s_\nu ))}_{\bm s, 0}=0 \ \ \ \forall \nu\in \Lambda, \nu\neq \mu \nonumber \\
\meanv{s_\mu\tanh(\beta(\bm{s}\cdot \bm{p})),s_\mu s_\nu}&=& \meanv{s_\nu\tanh(\beta(\bm{s}\cdot \bm{p})),1} =A^\nu(\bm p)\ \ \forall \nu\in \Lambda, \nu\neq \mu \nonumber \\
\meanv{s_\mu\tanh(\beta(\bm{s}\cdot \bm{p})),s_k s_\nu}&=& \meanv{\tanh(\beta(p^\mu s_k s_\nu + \sum_{l\neq \mu} p^l s_l s_\mu s_k s_\nu))}_{\bm s, 0}= 0\ \ \ \forall k,\nu \in \Lambda,  k,\nu \neq \mu\nonumber \\
\meanv{s_\mu\tanh(\beta(\bm{s}\cdot \bm{p})),s_X}&=& 0  \ \ \ \forall X\subset\mathcal{P}(\Lambda), |X|>2\nonumber.
\end{eqnarray}
Using eq. (\ref{eq:fourier}) it holds $\forall  \mu,\nu \in \Lambda, \mu\neq\nu$, that
\begin{equation}
p^\mu-p^\nu =(1-m_0^2(\beta))\left(A^\mu(\bm p)-A^\nu(\bm p)\right).
\end{equation}
and by direct computation we have that
\begin{eqnarray}
A^\mu(\bm p)-A^\nu(\bm p)
&=& \meanv{\tanh(\beta p^\mu + \beta\sum_{k\neq \mu} p^k s_k) }_{\bm{s},0}
- \meanv{\tanh(\beta p^\nu + \beta\sum_{l\neq \nu} p^l s_l) }_{\bm{s},0}\nonumber \\
&=&\frac 1 2\left( \meanv{\tanh(\beta p^\mu + \beta p^\nu+ \beta\sum_{k\neq \mu,\nu} p^k s_k) }_{\bm{s},0} + \meanv{\tanh(\beta p^\mu - \beta p^\nu+ \beta\sum_{k\neq \mu,\nu} p^k s_k) }_{\bm{s},0}\right)\nonumber\\
&-&\frac 1 2\left( \meanv{\tanh(\beta p^\nu + \beta p^\mu+ \beta\sum_{k\neq \mu,\nu} p^k s_k) }_{\bm{s},0} + \meanv{\tanh(\beta p^\nu - \beta p^\mu+ \beta\sum_{k\neq \mu,\nu} p^k s_k) }_{\bm{s},0}\right)\nonumber\\
&=&  \meanv{\tanh(\beta (p^\mu - p^\nu)+ \beta\sum_{k\neq \mu,\nu} p^k s_k) }_{\bm{s},0}\nonumber.
\end{eqnarray}
Thus $p^\mu-p^\nu$ satisfies an equation of the form
\begin{equation}
p^\mu-p^\nu=(1-m_0^2)\meanv{\tanh(\beta (p^\mu - p^\nu)+ \beta Z^{\bm p}) }_{Z^{\bm p}}
\end{equation}
with $Z^{\bm p}=\sum_{k\neq \mu,\nu} p^k s_k$ a random noise. The function $A \meanv{\tanh(B x + Z^{\bm p})}_{Z^{\bm p}}$ is odd and for $x\geq 0$ it is always under the line $AB x$. In fact, for any vector $\bm p$ and $\lambda\in(0,1)$, 
\begin{eqnarray}
\frac{d}{d\lambda} A \meanv{\tanh(B x + \lambda \bm p \cdot \bm s)}_{\bm s,0}&=& 
A\meanv{ ( \bm p \cdot \bm s)\left( 1-\tanh^2(B x + \lambda \bm p \cdot \bm s)\right)}_{\bm s,0}\nonumber \\
&=& - A\meanv{ ( \bm p \cdot \bm s)\tanh^2(B x + \lambda \bm p \cdot \bm s)}_{\bm s,0}\nonumber\\
&=& - \frac{A}{2}\meanv{ ( \bm p \cdot \bm s)\left(\tanh^2(B x + \lambda \bm p \cdot \bm s)-
\tanh^2(B x - \lambda \bm p \cdot \bm s)\right)  }_{\bm s,0}\nonumber \\
&\leq& 0
\end{eqnarray}
since $\tanh^2(x+y)-\tanh^2(x-y) \geq 0$, $\forall x,y \geq 0$. As a consequence it holds $\forall \bm p$ and $x\geq 0$
\begin{equation}
A \meanv{\tanh(B x + \bm p \cdot \bm s)}_{\bm s,0}\leq A \tanh(B x) \leq AB x.
\end{equation}
Therefore we have that 
\begin{equation}\label{eq:dim}
\left|p^\mu - p^\nu\right| \leq \beta(1-m_0^2(\beta)) \left|p^\mu - p^\nu\right|. 
\end{equation}
From the theory of the Curie Weiss model it holds that $\beta(1-m_0^2(\beta))<1$ at any temperature. In fact if $\beta<1$ it is $m_0(\beta)=0$ and $\beta(1-m_0^2(\beta))=\beta<1$. On the contrary if $\beta>1$,  $\tanh(\beta x)$ intersects the bisector from above at the point $x=m_0(\beta)>0$, thus with
\begin{equation}
1>\frac{d}{dx}\tanh(\beta x)|_{m_0}=\beta(1-\tanh^2(\beta m_0))=\beta(1-m_0^2).
\end{equation}
Therefore the only solution of (\ref{eq:dim}) is $\left|p^\mu - p^\nu\right| =0$.
\end{proof}


\begin{proof}{of Proposition \ref{prop:2}}\\
\noindent We have to solve
\begin{equation}
\bar{p}=\sum_{\bm{s}} s_1 \frac{e^{\beta m_0\sum_{\mu=1}^Ms^\mu}}{\left(2\cosh(\beta m_0) \right)^M} \tanh (\beta \bar{p} \sum_{\mu=1}^M s^\mu).
\end{equation}
By symmetry ($\bm{s}\to-\bm{s}$) this is equivalent to 
\begin{equation}
\bar{p}=\sum_{\bm{s}} s_1 \frac{\cosh(\beta m_0\sum_{\mu=1}^Ms^\mu)}{\left(2\cosh(\beta m_0) \right)^M} \tanh (\beta \bar{p} \sum_{\mu=1}^M s^\mu).
\end{equation}
Evaluating the rhs in the point $\bar{p}=m_0$ and denoting $z=2\cosh(\beta m_0)$ we have
\begin{eqnarray}
&& \sum_{\bm{s}} s_1 \frac{\sinh(\beta m_0\sum_{\mu=1}^Ms^\mu)}{\left(2\cosh(\beta m_0) \right)^M} = z^{-M} \sum_{\bm{s}} s_1 \sinh(\beta m_0\sum_{\mu=1}^{M-1}s^\mu + \beta m_0 s^M)\nonumber \\
&=& z^{-M} \sum_{s^M}\sum_{s^1,\ldots,s^{M-1}} s_1 \left[\sinh(\beta m_0\sum_{\mu=1}^{M-1}s^\mu )\cosh(\beta m_0 s^M) + \cosh(\beta m_0\sum_{\mu=1}^{M-1}s^\mu )\sinh(\beta m_0 s^M) \right]\nonumber\\
&=& z^{-M} \sum_{s^M}\sum_{s^1,\ldots,s^{M-1}} s_1 \sinh(\beta m_0\sum_{\mu=1}^{M-1}s^\mu )\cosh(\beta m_0 s^M) = \sum_{s^1,\ldots,s^{M-1}}s_1\frac{ \sinh(\beta m_0\sum_{\mu=1}^{M-1}s^\mu )}{\left(2\cosh(\beta m_0) \right)^{M-1}}\nonumber\\
&=&\ldots = \sum_{s^1}s_1\frac{ \sinh(\beta m_0 s^1 )}{2\cosh(\beta m_0)} =\tanh(\beta m_0). \nonumber
\end{eqnarray}
\end{proof}


\begin{proof}{of Proposition \ref{prop:3}}\\
Thanks to Proposition \ref{prop:p} it is sufficient to evaluate the free energy along the line $p^\mu=\bar{p}$, where
\begin{equation}
f(\bar{p})= \frac{M}{2} \bar{p}^2 -\frac{1}{\beta}\meanv{\log 2\cosh (\beta\bar{p} \sum_{\mu=1}^M s_\mu )}_{\bm s} .
\end{equation}
Since $f(\bar{p})$ is an even function we can just study the branch $\bar{p}\geq 0$. As a function of $\bar{p}^2$ the free energy is convex since
\begin{equation}
\frac{df}{d\bar{p}^2}=\frac{f'(\bar{p})}{2\bar{p}}=\frac{M}{2}\left(1-\frac{\meanv{s_1\tanh(\beta\bar{p} \sum_{\mu=1}^M s_\mu )}_{\bm s}}{ \bar{p}} \right)
\end{equation}
is an increasing function, therefore the position of the minimum depends on the sign of the derivative in zero. Since for $\beta>1$ 
\begin{equation}
\frac{df}{d\bar{p}^2}|_{\bar{p}=0}=\frac M 2 \left( 1 - \beta -\beta(M-1)m_0^2 \right)<\frac M 2 (1-\beta) <0,
\end{equation}
the minimum is attained away from zero.
\end{proof}

\section*{Acknowledgement}
FA, DT and GM acknowledge GNFM-Indam for financial support. This work was partially supported by project SERICS (PE00000014) under the MUR National Recovery and Resilience Plan funded by the European Union - NextGenerationEU. 

\bibliography{refs}
\bibliographystyle{unsrt}


\appendix

\section{Replica computation of the RS conjectures}\label{appendix:replica}

We define the model partition function as
\begin{equation}
    Z(\bm{\mathcal{S}}) := \sum_{\bm{\xi}}\exp \left( \frac{\beta}{N} \sum_{b=1}^M \sum_{i<j} s_i^b s_j^b \xi_i \xi_j \right)=\sum_{\bm{\xi}}\exp \left( \frac{\beta}{2N} \sum_{b=1}^M \sum_{i,j} s_i^b s_j^b \xi_i \xi_j - M\frac{\beta}{2}\right).
\end{equation}
We assume that the system could be aligned with a subset  $\ell_1 = \mathcal{O}(1) $ of examples, extracted by the Curie-Weiss at inverse temperature $\hat{\beta}$ and we measure the corresponding overlaps with
\begin{equation}
    p^{b}(\bxi) = \frac{1}{N} \sum_{i=1}^N s_i^b \xi_i \quad \quad b=1,...,\ell_1 \;.
\end{equation}
to get 
\begin{equation}
    Z(\bm{\mathcal{S}}) =\sum_{\bm{\xi}}\exp \left( 
    \frac{\beta N}{2} \sum_{b=1}^{\ell_1}
    (p^b(\xi) )^2 +
    \frac{\beta}{2N} \sum_{b=\ell_1+1}^M \sum_{i,j} s_i^b s_j^b \xi_i \xi_j 
    \right) \;.
\end{equation}
In the present analysis we focus on the case when $\ell_1 =1$ for the sake of brevity.
Our aim is to compute the disorder average free energy density:
\begin{equation}
    -\beta f(\beta,\hat{\beta},\gamma) = \lim_{N \to \infty} \frac{1}{N} \left[ \ln Z \right]^{\bm{\mathcal{S}}}\,,
\end{equation}
here $\left[\cdots\right]^{\bm{\mathcal{S}}}$ corresponds to the average versus disorder, given by the dual patterns.
It is possible to rewrite the previous expression by exploiting the standard replica trick as
\begin{equation}\label{G:Replica Trick2}
-\beta f(\beta,\hat{\beta},\gamma)  = \lim_{\substack{N \to \infty \\ n\to 0}}\frac{ \ln [Z^n]^{\bm{\mathcal{S}}}}{Nn} 
\end{equation}
where
\begin{align*}
        \left[Z^n\right]^{\bm{\mathcal{S}}} & = \sum_{\bm{\mathcal{S}}} P^{CW}_{\hat{\beta}}(\bm{\mathcal{S}})\ Z^n(\bm{\mathcal{S}})
        = \sum_{\bm{\mathcal{S}}}\ \prod_{b=1}^M \frac{1}{z(\hat{\beta})}\  e^{\hat{\beta}/2N \sum_{i,j}s_i^bs_j^b} \ Z^n(\bm{\mathcal{S}}) \,,
\end{align*}
and
\begin{align*}
        Z^n(\bm{\mathcal{S}})
        = \sum_{\bm{\xi}^1,\ldots,\bm{\xi}^n}   \exp \left( 
    \frac{\beta N}{2} \sum_{a=1}^n
    (p^{a}(\xi))^2 +
    \frac{\beta}{2N} \sum_{a=1}^n\sum_{b=2}^M \sum_{i,j} s_i^b s_j^b \xi_i^a \xi_j^a 
    \right)\, .
\end{align*}
The summation over $\bm{\mathcal{S}}$, is intended over $(\bm{s}^1,\ldots,\bm{s}^M)$.
One can subdivide $\left[Z^n\right]^{\bm{\mathcal{S}}}$ in two parts: the one related to the first aligned example (say $\bm{s^1}$) and the second related to the others. These two pieces will be evaluated separately
\begin{align*}
        \left[Z^n\right]^{\bm{\mathcal{S}}} &
        = \sum_{\bm{\xi}^1,\ldots,\bm{\xi}^n}  \sum_{\bm{s}}\ \frac{1}{z(\hat{\beta})}\  \exp \left( \frac{\hat{\beta}}{2N} \sum_{i,j}s_i s_j +
    \frac{\beta N}{2} \sum_{a}
    (p^{a}(\xi))^2 \right) \times \\
    & 
    \phantom{\sum_{\bm{\xi}^1,\ldots,\bm{\xi}^n \sum \frac{1}{Z}}  } 
    \times\sum_{\bm{\mathcal{S}}} \prod_{b=2}^M \frac{1}{z(\hat{\beta})} \exp \left( \frac{\hat{\beta}}{2N} \sum_{i,j}s_i^b s_j^b + \frac{\beta}{2N} \sum_{a}\sum_{i,j}s_i^b s_j^b\xi_i^a\xi_j^a \right) \,,\\
    &
        = \sum_{\bm{\xi}^1,\ldots,\bm{\xi}^n}  \sum_{\bm{s}}\ \frac{1}{z(\hat{\beta})}\  \exp \left( \frac{\hat{\beta}N}{2} m_0^2(s) +
    \frac{\beta N}{2} \sum_{a}
    (p^{a}(\xi))^2 \right) \times \\
    & 
    \phantom{\sum_{\bm{\xi}^1,\ldots,\bm{\xi}^n \sum \frac{1}{Z}}    } \times\left(\frac{2^N}{z(\hat{\beta})} \overline{e^{\hat{\beta}/2N \sum_{i,j}s_i s_j + \beta/2N\sum_a\sum_{i,j}s_i s_j\xi_i^a\xi_j^a}}^{\ \bm{s}}\right)^{M-1} \,,
\end{align*}
where in the first line we simply drop out the index for the first example and we introduced the magnetization as $m_0(\bs) = 1/N \sum_i^N s_i$, while in the second line we denoted by $\overline{\cdots}^{\ \bm{s}}$ the average w.r.t a single example. The second term can be computed by introducing  Gaussian variables $(z^1,\ldots,z^n)$ and $z$ to linearize the exponent as
\begin{align*}
       \phantom{=}& \int\prod_{a=1}^n Dz^a Dz\  \overline{e^{\sqrt{\beta/N}\sum_i \sum_a z^a \xi_i^as_i + \sqrt{\hat{\beta}/N}\sum_i z s_i}}^{ \ \bm{s}}  =\int\prod_{a=1}^n Dz^a Dz\ e^{\sum_i \ln \cosh(\sqrt{\beta/N}\sum_a z^a \xi_i^a + \sqrt{\hat{\beta}/N} z)}
\end{align*}
\begin{equation}
    \approx \int\prod_{a=1}^n Dz^a Dz\ e^{\beta/2 \left[\sum_{a\neq b} z_a z_b q_{ab} + \sum_a z_a^2 \right] + \hat{\beta}/2 z^2 +\sqrt{\hat{\beta} \beta} z \sum_a z_a m_a } =: \det\big(\bm{\Xi}(\bm{q},\bm{m})\big)^{-1/2}\,,
\end{equation}
where in the last line we have expanded the $\ln \cosh(x)$ and we have introduced the quantities
\begin{equation}
    q_{ab}(\bm{\xi})=\frac{1}{N}\sum_{i=1}^N \xi^a_i\xi^b_i \,,\ \ \ \ \ m_a(\bm{\xi})=\frac{1}{N}\sum_{i=1}^N \xi^a_i \;.
\end{equation}
The averaged partition function thus becomes
\begin{equation}
        \left[Z^n\right]^{\bm{\mathcal{S}}} 
        = 
         \sum_{\bm{\xi}^1,\ldots,\bm{\xi}^n} \sum_{\bm{s}}\ e^{  \frac{\hat{\beta}N}{2} m_0(\bm{s})^2 +    \frac{\beta N}{2} \sum_{a}
    (p^{a}(\bm{\xi}))^2 } 
    \  \frac{2^{N(M-1)}}{z(\hat{\beta})^{M}}
       \det\big(\bm{\Xi}(\bm{q},\bm{m})\big)^{(1-M)/2} .
\end{equation}
If we denote by $\mathcal{D}(m_0)$ the density of states for the $\bm{s}$ configuration
\begin{align*}
    \mathcal{D}(m_0) = & \sum_{\bm{s}} \delta\left( m_0 - \frac{1}{N}\sum_i s_i\right)\propto \sum_{\bm{s}}\int d\hat{m}_0 \exp{-N \hat{m}_0 \left( m_0 -\frac{1}{N}\sum_i s_i \right)}
\end{align*}
and with $\mathcal{D}(\bm{q}, \bm{m}, \bm{p})$ the one related to the states of the student machine
\begin{align*}
\small
    \mathcal{D}(\bm{q}, \bm{m}, \bm{p}) & = \sum_{\bm{\xi}^1,\ldots,\bm{\xi}^n} \prod_{a<b}\delta\left(q_{ab}-\frac{1}{N}\sum_i \xi_i^a \xi_i^b\right)\prod_{a}\delta \left( m_a-\frac{1}{N}\sum_i\xi_i^a \right) \prod_{a}\delta \left( p_a-\frac{1}{N}\sum_i s_i \xi_i^a \right)\\
    & \propto  \sum_{\bm{\xi}^1,\ldots,\bm{\xi}^n} \int \prod_{a<b} d\hat{q}_{ab}\   \exp\left(-N\sum_{a<b} \hat{q}_{ab} \left(q_{ab}-\frac{1}{N}\sum_i\xi^a_i\xi^b_i \right) \right) \times \\
    &\times \int \prod_{a}  d\hat{m_a}\ \exp\left( -N\sum_a\hat{m}_a\left(m_a- \frac{1}{N}\sum_i \xi^a_i \right)\right) \int \prod_{a} d\hat{p}_a\   \exp\left(-N\sum_a\hat{p}_a\left(p_a- \frac{1}{N}\sum_i s_i \xi^a_i \right)\right) 
\end{align*}
the averaged partition function becomes
\begin{align*}
        \left[Z^n\right]^{\bm{\mathcal{S}}} &
        \propto
         \int d m_0 \  \mathcal{D}(m_0) \int d \bm{q} d \bm{m} d \bm{p} \ \mathcal{D}(\bm{q}, \bm{m}, \bm{p})  \ e^{  \frac{\hat{\beta}N}{2} m^2_0 +
    \frac{\beta N}{2} \sum_{a}
    (p^{a})^2 } \  \frac{2^{N(M-1)}}{z(\hat{\beta})^{M}}
       \det\big(\bm{\Xi}(\bm{q},\bm{m})\big)^{(1-M)/2} \,,
\end{align*}
and then by saddle point approximation the free energy density results as
\begin{equation}
-\beta f 
 \approx  \lim_{\substack{N \to \infty \\ n\to 0}} \frac{1}{N n} \ln \Bigg( e^{N\extr f(\bm{q},\bm{m},\bm{p},\bm{m_0};\bm{\hat{q}},\bm{\hat{m}},\bm{\hat{p}},\bm{\hat{m}_0})} \Bigg) \approx  \lim_{n\to 0} \frac{1}{ n} \extr f(\bm{q},\bm{m},\bm{p},\bm{m_0};\bm{\hat{q}},\bm{\hat{m}},\bm{\hat{p}},\bm{\hat{m}_0})
\end{equation}
where
\begin{align*}
f(\bm{q},\bm{m},\bm{p},\bm{m_0};\bm{\hat{q}},\bm{\hat{m}},\bm{\hat{p}},\bm{\hat{m}_0}) & =  \frac{\beta}{2} \sum_{a}
    (p^{a})^2 + \frac{\hat{\beta}}{2} m^2_0 + (M-1) \ln 2 - M f_{CW} -\frac{\gamma}{2} \ln \det\big(\bm{\Xi}(\bm{q},\bm{m})\big) \\
    &  -\hat{m}_0 m_0 -\sum_a \hat{p}^a p^a -\sum_{a<b}\hat{q}^{ab} q^{ab}-\sum_a \hat{m}^a m^a \\
    &+ \ln \sum_{s} e^{\hat{m}_0 s} \sum_{\xi^1,\ldots,\xi^n} \exp \left( \sum_a \hat{p}^a\ s\  \xi_a + \sum_{a<b}\hat{q}^{ab}\xi_a \xi_b +\sum_a \hat{m}^a \xi_a\right)  \,.    
\end{align*}
The Replica symmetric (RS) ansatz assumes  $q_{ab}=q, m_a=m, p_a = p,\ \forall a,b=1,\ldots,n$.
Under the RS ansatz the term $\ln \det\big(\bm{\Xi}(\bm{q},\bm{m})\big)$ can be computed as
\begin{align*}
    \det\big(\bm{\Xi}(q,m)\big)^{-1/2}  &:= \int\prod_{a=1}^n Dz_a Dz \exp\left( \frac{\beta q}{2} \sum_{a\neq b}z_a z_b + \frac{\beta}{2}\sum_a z_a^2 + \frac{\hat{\beta} z^2}{2} + \sqrt{\hat{\beta} \beta } m\sum_a z_a z \right) \\
    & = \int\prod_{a=1}^n Dz_a \exp\left( \frac{\beta q}{2} \sum_{a\neq b}z_a z_b + \frac{\beta}{2}\sum_a z_a^2 \right) \int\frac{dz}{\sqrt{2\pi}}\exp\left(-(1-\hat{\beta}) \frac{z^2}{2} + \sqrt{\hat{\beta} \beta } m\sum_a z_a z \right) \\
    & = \int\prod_{a=1}^n Dz_a \exp\left( \frac{\beta q}{2} \sum_{a\neq b}z_a z_b + \frac{\beta}{2}\sum_a z_a^2 \right) (1-\hat{\beta})^{-\frac{1}{2}} \exp\left(\frac{\hat{\beta} \beta  m^2}{2(1-\hat{\beta})}(\sum_{a\neq b} z_a z_b + \sum_a z_a^2) \right) \\
    & = (1-\hat{\beta})^{-\frac{1}{2}} \det\left(\mathbb{I}-\beta\bm{q}-\frac{\hat{\beta} \beta  m^2}{2(1-\hat{\beta})}\bm{1}\right)^{-\frac{1}{2}} \,,
\end{align*}
where we introduced the matrix  $\bm{q}=(1-q)\bm{I}+q\bm{1}$. A matrix of the form $A\bm{I}+B\bm{1}$ has eigenvalues $\lambda_1=A+nB$ with multiplicity $1$ and $\lambda_2=A$ with multiplicity $n-1$. Hence $\ln \det\big(\bm{\Xi}(\bm{q},\bm{m})\big)$ can be expressed in the $n \to 0$ limit as
\begin{align*}
    \ln \det\big(\bm{\Xi}(\bm{q},\bm{m})\big) & = \ln \left( (1-\hat{\beta}) \det\left(\mathbb{I}-\beta\bm{q}-\frac{\hat{\beta} \beta  m^2}{2(1-\hat{\beta})}\bm{1}\right) \right)\\
    & = \ln \left[  (1-\hat{\beta})(1- \beta +\beta q) -n ((1-\hat{\beta})\beta q  +\hat{\beta}\beta m^2) \right] + (n-1)\ln\left[(1-\hat{\beta})(1- \beta +\beta q) \right]\\
    & \approx n \left[ \ln\left((1-\hat{\beta})(1- \beta +\beta q) \right) -\frac{\beta(1-\hat{\beta}) q +\hat{\beta}\beta m^2}{(1-\hat{\beta})(1- \beta +\beta q)}\right].
\end{align*}
Finally the term related to the density of states becomes under the RS ansatz
\begin{align*}
       &\sum_s \ e^{\hat{m}_0 s-n\frac{\hat{q}}{2}}  \int \mathcal{D}z \left[\sum_{\xi} \exp \left( \xi (  \hat{p}\ s\   +  \sqrt{\hat{q}}z+\hat{m})\right) \right]^n = 2\mathbb{E}_s \ e^{\hat{m}_0 s-n\frac{\hat{q}}{2}}  \int \mathcal{D}z \left[  2\mathbb{E}_{\xi} \exp \left( \xi (  \hat{p}\ s\   +  \sqrt{\hat{q}}z+\hat{m})\right) \right]^n  \;,\\
      &\approx e^{-n\frac{\hat{q}}{2}}\left( \ 2 \mathbb{E}_s \ e^{\hat{m}_0 s} + n \ 2 \mathbb{E}_s \ e^{\hat{m}_0 s}  \int \mathcal{D}z \   \ln 2  \mathbb{E}_{\xi} \ e^{ \xi (  \hat{p}\ s\   +  \sqrt{\hat{q}}z+\hat{m})}  \right)\;\\
      &\approx e^{-n\frac{\hat{q}}{2}} \ 2 \mathbb{E}_se^{\hat{m}_0 s} \left(1 + \frac{n}{\mathbb{E}_se^{\hat{m}_0 s}}  \mathbb{E}_s \ e^{\hat{m}_0 s}  \int \mathcal{D}z \   \ln 2  \mathbb{E}_{\xi} \ e^{ \xi (  \hat{p}\ s\   +  \sqrt{\hat{q}}z+\hat{m})}  \right)\;.
\end{align*}
Putting all together we define
\begin{equation}
-\beta f^{RS} (\beta,\hat{\beta},\gamma)
 =  \lim_{ n\to 0} \frac{1}{ n} \extr f^{RS}(p,q,m;\hat{p},\hat{q},\hat{m}) \,.
\end{equation}
where
\begin{align*}
f^{RS}(p,q,m;\hat{p},\hat{q},\hat{m})  &= \ln2 + \ln \cosh\hat{m}_0  + \frac{\hat{\beta}}{2} m^2_0 + (M-1) \ln 2 - M f_{CW}  -\hat{m}_0 m_0 + n\frac{\beta}{2}
    p^2 -n \hat{p} p+\\
    &   +\frac{n}{2}\hat{q}q-n\hat{m} m -n\frac{\gamma}{2}\ln\left((1-\hat{\beta})(1- \beta +\beta q) \right) +n\frac{\gamma}{2}  \frac{\beta(1-\hat{\beta}) q +\hat{\beta}\beta m^2}{(1-\hat{\beta})(1- \beta +\beta q)}+ \\
    &-n \frac{\hat{q}}{2}  + n \ln2 +\frac{n}{\mathbb{E}_s e^{\hat{m}_0 s}} \  \mathbb{E}_s e^{\hat{m}_0 s} \int \mathcal{D}z \ln \cosh \left(\hat{p}s + z\sqrt{\hat{q}} + \hat{m}\right) + \mathcal{O}(n^2)  \,. 
\end{align*}
Since $\hat{\beta}<1$, by extremizing with respect to $(\hat{m_0},m_0)$ we obtain $m_0 = 0$ and therefore
\begin{align*}
-\beta f^{RS}(\beta,\hat{\beta},\gamma) 
 = &\extr \Bigg[ -\frac{\gamma}{2} \ln\left((1-\hat{\beta})(1- \beta +\beta q) \right) +\frac{\gamma}{2}\frac{\beta(1-\hat{\beta}) q +\hat{\beta}\beta m^2}{(1-\hat{\beta})(1- \beta +\beta q)}  +\frac{1}{2}\hat{q}q  +  \\
    & -\hat{m}m - \frac{\hat{q}}{2}-\hat{p}p+ \frac{\beta}{2}p^2+  \ln2 + \mathbb{E}_s \int \mathcal{D}z \ln \cosh \left(\hat{p} s + z\sqrt{\hat{q}} + \hat{m}\right)\Bigg].
\end{align*}
 The equations for the saddle point are 
 \begin{align*}
    m&=\mathbb{E}_s\int Dz \tanh(\hat{m}+z\sqrt{\hat{q}} +\hat{p}s)\\ \hat{m}&= \frac{\gamma \hat{\beta}\beta m}{(1-\hat{\beta})(1-\beta+\beta q)}\\
    q &= \mathbb{E}_s \int Dz \tanh^2(\hat{m}+z\sqrt{\hat{q}} +\hat{p}s)\\
    \hat{q} &=  \frac{\gamma \hat{\beta} \beta^2(m^2)}{(1-\hat{\beta})(1-\beta+\beta q)^2}+\frac{\gamma \beta^2 q}{(1-\beta+\beta q)^2}\\ 
    p&= \mathbb{E}_s\int Dz \tanh(\hat{m}+z\sqrt{\hat{q}} +\hat{p}s) s\\
    \hat{p} &= \beta p
\end{align*}
where integration by parts was used to calculate $q$.
In particular on the Nishimori line, where $\hat{\beta}=\beta <1$ and $p=0$, it is
\begin{equation}
    -\beta f^{RS}(\beta,\gamma) = \extr \Bigg[ -\frac{\gamma}{2} \ln \det \bm{\Xi}\Bigr|_{\hat{\beta}=\beta} + \frac{1}{2}\hat{q}q - \hat{m}m - \frac{\hat{q}}{2}+  \ln 2 + \int Dz\ln \cosh(\hat{m}+z\sqrt{\hat{q}}) \Bigg]
\end{equation}
and 
\begin{align*}
   m&=\int Dz \tanh(\hat{m}+z\sqrt{\hat{q}})\\ \hat{m}&= \frac{\gamma \beta^2 m}{(1-\beta)(1-\beta+\beta q)}\\
    q &= \int Dz \tanh^2(\hat{m}+z\sqrt{\hat{q}})\\
    \hat{q} &=  \frac{\gamma \beta^3(m^2-q^2)}{(1-\beta)(1-\beta+\beta q)^2}+\frac{\gamma \beta^2 q}{(1-\beta)(1-\beta+\beta q)} \,.
\end{align*}
\end{document}